\documentclass[aps,prl,reprint,amsmath,amssymb,superscriptaddress,floatfix,longbibliography]{revtex4-1}
\usepackage{amsthm,dsfont,ctable,url,braket,mathtools}
\usepackage{natbib}
\usepackage[colorlinks,
            linkcolor=blue,
            anchorcolor=blue,
            citecolor=blue,
urlcolor=blue
            ]{hyperref}
\bibliographystyle{apsrev4-1}
\usepackage{graphicx}
\usepackage[FIGTOPCAP,raggedright,nooneline]{subfigure}
\usepackage{float}
\usepackage{longtable}

\usepackage[T1]{fontenc}
\usepackage{lmodern}
\newtheorem{theorem}{Theorem}
\newtheorem{lemma}{Lemma}

\newcommand{\Tr}{\operatorname{Tr}}

\def\>{\ensuremath{\rangle}}
\def\<{\ensuremath{\langle}}
\newcommand*\diff{\mathop{}\!\mathrm{d}}

\begin{document}

\title{Entirety of Quantum Uncertainty and Its Experimental Verification}

\author{Jie Xie}
\altaffiliation{These authors contributed equally to this work.}
\affiliation{National Laboratory of Solid State Microstructures, College of Engineering and Applied Sciences and School of Physics, Nanjing University, Nanjing 210093, China}
\affiliation{Collaborative Innovation Center of Advanced Microstructures, Nanjing University, Nanjing 210093, China}

\author{Songtao Huang}
\altaffiliation{These authors contributed equally to this work.}
\affiliation{National Laboratory of Solid State Microstructures, College of Engineering and Applied Sciences and School of Physics, Nanjing University, Nanjing 210093, China}
\affiliation{Collaborative Innovation Center of Advanced Microstructures, Nanjing University, Nanjing 210093, China}

\author{Li Zhou}
\altaffiliation{These authors contributed equally to this work.}
\affiliation{Department of Computer Science and Technology, Tsinghua University, Beijing, China}

\author{Aonan Zhang}
\affiliation{National Laboratory of Solid State Microstructures, College of Engineering and Applied Sciences and School of Physics, Nanjing University, Nanjing 210093, China}
\affiliation{Collaborative Innovation Center of Advanced Microstructures, Nanjing University, Nanjing 210093, China}

\author{Huichao Xu}
\affiliation{National Laboratory of Solid State Microstructures, College of Engineering and Applied Sciences and School of Physics, Nanjing University, Nanjing 210093, China}
\affiliation{Collaborative Innovation Center of Advanced Microstructures, Nanjing University, Nanjing 210093, China}

\author{Man-Hong Yung}
\affiliation{Shenzhen Institute for Quantum Science and Engineering and Department of Physics, Southern University of Science and Technology, Shenzhen 518055, China}
\affiliation{Shenzhen Key Laboratory of Quantum Science and Engineering, Southern University of Science and Technology, Shenzhen 518055, China}

\author{Nengkun Yu}
\email[]{nengkunyu@gmail.com}
\affiliation{Centre for Quantum Software and Information, School of Software, Faculty of Engineering and Information Technology, University of Technology Sydney NSW}

\author{Lijian Zhang}
\email[]{lijian.zhang@nju.edu.cn}
\affiliation{National Laboratory of Solid State Microstructures, College of Engineering and Applied Sciences and School of Physics, Nanjing University, Nanjing 210093, China}
\affiliation{Collaborative Innovation Center of Advanced Microstructures, Nanjing University, Nanjing 210093, China}

\date{\today}

\begin{abstract}
As a foundation of modern physics, uncertainty relations describe an ultimate limit for the measurement uncertainty of incompatible observables. Traditionally, uncertain relations are formulated by mathematical bounds for a specific state. Here we present a method for geometrically characterizing uncertainty relations as an entire area of variances of the observables, ranging over all possible input states. We find that for the pair of position $x$ and momentum $p$ operators, Heisenberg's uncertainty principle points exactly to the area of the variances of $x$ and $p$. Moreover, for finite-dimensional systems, we prove that the corresponding area is necessarily semialgebraic; in other words, this set can be represented via finite polynomial equations and inequalities, or any finite union of such sets. In particular, we give the analytical characterization of the areas of variances of (a) a pair of one-qubit observables, (b) a pair of projective observables for arbitrary dimension, and give the first experimental observation of such areas in a photonic system.
 \end{abstract}

\maketitle

\par
\textit{Introduction.---}Quantum mechanics is revolutionizing our outlook on the world. The most dramatically changing may be ``god does play dice with the universe". In other words, quantum world is unpredictable inherently. Heisenberg's uncertainty principle is the most famous result of this unpredictability of the quantum world \cite{Heisenberg1927}. Roughly speaking, it asserts a fundamental limit to the precision with which the position and momentum can be known simultaneously. After that, many efforts have been devoted to understand this mystery. Recent years, the uncertainty relation still aroused a lot of research interests and shows more and more applications in quantum information science, such as providing separability criteria \cite{PhysRevLett.92.117903}, determining the nonlocality of quantum systems \cite{Oppenheim2010} and device-independent quantum cryptography \cite{PhysRevA.90.012332}. Many different types of uncertainty relations have been proposed \cite{RevModPhys.86.1261,RevModPhys.89.015002} and aroused a heated discussion, especially the error-disturbance relations on the joint measurement of two incompatible observables \cite{PhysRevA.67.042105,PhysRevLett.111.160405,PhysRevLett.112.050401,zhou2016verifying}.
 \par
 The most well-known uncertainty relation of position $x$ and momentum $p$ was rigorously proven by Kennard \cite{Kennard1927} and Weyl \cite{Weyl1927}, $\Delta x\Delta p \geq\frac{\hbar^2}{4}$, where $\Delta x$ and $\Delta p$ stand for the variances of $x$ and $p$. Then Robertson \cite{Robertson1929} extended it for two arbitrary observables $A$ and $B$, $\Delta A\Delta B\geq\frac{1}{4}|\langle[A,B]\rangle|^2$, where $\langle[A,B]\rangle$ stands for the ensemble average of the commutator $[A,B]$. Here the variances and average value are in terms of a particular state, and the above equation was later improved by Schr\"{o}dinger \cite{Schrodinger1930}. All the above uncertainty relations deal with the product of variances of quantum observables, which can be trivial when one of the variances goes to zero \cite{PhysRevLett.113.260401}, e.g., if the system is in an eigenstate of $A$, then $\Delta A=0$. Due to this problem, uncertainty relations in the form of summation of variances of observables has been proposed \cite{Robertson1929,Pati2007,PhysRevLett.113.260401}. Moreover, some researchers think entropy would be the more natural and reasonable way to characterize uncertainty \cite{Hirschman1957,Beckner1975,Deutsch1983,Maassen1988,PhysRevA.89.022112,Xiao2017}. However, these uncertainty relations do not give an overall description for all quantum states, in the sense that they either give a state-dependent \cite{Robertson1929,Pati2007,PhysRevLett.113.260401} bound or a lower bound that not optimal for some states\cite{Maassen1988,PhysRevA.89.022112}.

 \par
To give an universal description of the quantum uncertainty for all states, the concept of uncertainty region (UR) has been proposed, which is the set of variances $(\Delta O_1,\cdots, \Delta O_n)$ (here we mainly focus on the variance as uncertainty measure) of a set of observables $\{O_1, ..., O_n\}$ for all quantum states, naturally forming a geometric area in Euclidean space.
This concept was first proposed as a state-independent uncertainty relation in Ref. \cite{li2015reformulating}, which gave the variance-based uncertainty region for a pair of traceless qubit observables. Then Abbott et al. \cite{Abbott2016} generalized it to the UR of any number of $\pm1$-valued Pauli observables (both standard derivation-based and entropy-based). Recently, nontrivial URs of some specific pairs of qutrit observables was also given \cite{Busch2019}, while URs of a pair of arbitrary qutrit observables or higher dimensional nontrivial quantum observables is still very difficult to derive, because the quantum state space becomes very complicated as the dimension grows.

In this work, we first propose the concept of UR for pure states and mixed states and show that both uncertainty regions are semialgebraic sets in a Euclidean space. We characterize the URs of a pair of projective operators, which shows simpler results than those of Pauli observables and can be easily extended to arbitrary quantum observables
in the qubit case. For infinite-dimensional system, we show that the standard Heisenberg's uncertainty principle represents the exact UR of $x$ and $p$. For finite-dimensional system, we give the analytical characterization of URs of a pair of arbitrary qubit and qudit projectors. In addition, we give the first experimental observation of quantum URs for qubit and qutrit projective measurements with photonic setups. Our experimental results show good agreement with to the theoretical prediction.

\textit{Theory.---}For a set of observables $\{O_1,\cdots,O_n\}$, finite or infinite-dimensional, we are interested in the following two URs, $R_p:=\{(\Delta O_{1,\psi},\cdots,\Delta O_{n,\psi}):\<\psi|\psi\>=1\}$ and $R_m:=\{(\Delta O_{1,\rho},\cdots,\Delta O_{n,\rho}):\rho\geq0,\Tr\rho=1\}$, namely the URs for pure states and mixed states, where $\Delta O_{i,\psi}=\bra{\psi}O_i^2\ket{\psi}-\bra{\psi}O_i\ket{\psi}^2$ and $\Delta O_{i,\rho}=\Tr(O_i^2\rho)-\Tr^2(O_i\rho)$, denoting the variance of $O_i$. It is obvious that $R_p\subset R_m$, besides this, we have the following characterization.
\begin{theorem}
In finite-dimensional Hilbert space, $R_m$ and $R_p$ are both semialgebraic sets. A semialgebraic set in $\mathbb{R}^n$ is defined by the finite union of the sets determined by a finite number of polynomial equations $p(x_{1},...,x_{n})=0$ and inequalities $q(x_{1},...,x_{n})>0$.
\end{theorem}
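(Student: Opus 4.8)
The plan is to realize both $R_m$ and $R_p$ as images of semialgebraic sets under polynomial maps, and then to invoke the Tarski--Seidenberg theorem, which guarantees that the projection---and hence the polynomial-map image---of a semialgebraic set is again semialgebraic.

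First I would encode the state space as a semialgebraic subset of a real Euclidean space. Writing a $d\times d$ density matrix as $\rho=(\rho_{jk})$ with $\rho_{jj}\in\mathbb{R}$ and $\rho_{jk}=a_{jk}+ib_{jk}$, $\rho_{kj}=a_{jk}-ib_{jk}$ for $j<k$, the Hermitian matrices are parametrized by the $d^2$ real coordinates $(\rho_{jj},a_{jk},b_{jk})\in\mathbb{R}^{d^2}$. The constraint $\Tr\rho=1$ is a single linear equation, and $\rho\geq0$ is equivalent to the nonnegativity of all principal minors of $\rho$, each of which---being the determinant of a Hermitian submatrix---is a real polynomial in these coordinates. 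Hence the set $\mathcal{D}$ of mixed states is semialgebraic (in fact a closed convex spectrahedron). For pure states I would either add the polynomial equations $\rho^2=\rho$ cutting out the rank-one projectors, or parametrize $\ket{\psi}\in\mathbb{C}^d$ by $2d$ real coordinates subject to the single polynomial equation $\<\psi|\psi\>=1$; either description yields a semialgebraic set $\mathcal{P}$.

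Next I would note that each variance is a polynomial map in these coordinates: for fixed Hermitian $O_i$ both $\Tr(O_i^2\rho)$ and $\Tr(O_i\rho)$ are real-linear in the entries of $\rho$, so $\Delta O_{i,\rho}=\Tr(O_i^2\rho)-\Tr^2(O_i\rho)$ is a real polynomial of degree two, and $\Phi(\rho)=(\Delta O_{1,\rho},\dots,\Delta O_{n,\rho})$ is a polynomial map $\mathbb{R}^{d^2}\to\mathbb{R}^n$. I would then form the graph
\[
\Gamma=\{(\rho,y)\in\mathcal{D}\times\mathbb{R}^n : y_i=\Delta O_{i,\rho},\ i=1,\dots,n\},
\]
which is semialgebraic, being the intersection of $\mathcal{D}\times\mathbb{R}^n$ with the zero set of the polynomials $y_i-\Delta O_{i,\rho}$. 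Since $R_m$ is exactly the image of $\Gamma$ under the coordinate projection $(\rho,y)\mapsto y$, Tarski--Seidenberg gives that $R_m$ is semialgebraic; replacing $\mathcal{D}$ by $\mathcal{P}$ yields the same conclusion for $R_p$. The only genuine engine here is the Tarski--Seidenberg projection theorem, so there is no serious analytic obstacle; the one point requiring care is the bookkeeping that certifies positive semidefiniteness as a bona fide polynomial-inequality condition and faithfully re-encodes the complex matrix entries in real coordinates, after which the claim reduces to the closure of semialgebraic sets under intersection and projection.
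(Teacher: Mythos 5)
Your proposal is correct and follows essentially the same route as the paper: parametrize the state space by real coordinates, observe that the graph of the variance map over that (semialgebraic) state space is semialgebraic, and conclude by the Tarski--Seidenberg projection theorem. The only difference is that you spell out the bookkeeping the paper leaves implicit (principal minors certifying $\rho\geq0$, the real re-encoding of complex entries, and the explicit graph construction), which strengthens rather than changes the argument.
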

\begin{proof}We can parameterize $\rho$ using $d^2-1$ real parameters $x_1,\cdots,x_{d^2-1}$, then observe that
\begin{align*}
S_m:=\{(\Delta O_{1,\rho},\cdots,\Delta O_{n,\rho},x_1,\cdots,x_{d^2-1}): \\
\rho\sim(x_1,\cdots,x_{d^2-1})\}
\end{align*}
is semialgebraic. One of the most important property of semialgebraic sets, known as Tarski-Seidenberg theorem~\cite{coste2000introduction}, is that they are closed under the projection operators: that is, projecting a semialgebraic set onto a linear subspace yields another semialgebraic set. Therefore, $R_m$ is semialgebraic. The same idea shows that $R_p$ is semialgebraic where only $2d-1$ real parameters are used.
\end{proof}
Besides, when considering the UR of two projective operators, we have the following conclusion.
\begin{theorem}
For two projections $P$ and $Q$, we can always have
$$R_p=R_m.$$
\end{theorem}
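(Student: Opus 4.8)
The goal is to prove that for two projections $P$ and $Q$, the pure-state uncertainty region $R_p$ coincides with the mixed-state region $R_m$. Since $R_p \subset R_m$ is already noted in the text, the real content is the reverse inclusion $R_m \subset R_p$: every variance pair achievable by a mixed state must also be achievable by some pure state.

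\medskip

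The plan is to start from the structural theory of two projections. A classical result (sometimes attributed to Halmos, the ``two subspaces'' or ``CS decomposition'') says that any pair of projections $P,Q$ can be simultaneously block-diagonalized into blocks of size at most $2\times 2$. Concretely, the Hilbert space decomposes into a direct sum of joint invariant subspaces on which $P$ and $Q$ act either as commuting projections (trivial $1\times 1$ blocks, where they share eigenvectors) or as a genuinely $2\times 2$ pair parameterized by a single angle $\theta_j$, with $P = \tfrac12(I + \vec n_P\cdot\vec\sigma)$ and $Q=\tfrac12(I+\vec n_Q\cdot\vec\sigma)$ in each such qubit block. First I would invoke this decomposition to reduce the analysis of the variances to a sum of contributions, one per block.

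\medskip

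Next I would examine the variance functionals themselves. Since $P^2 = P$ and $Q^2 = Q$, the variances simplify to $\Delta P_\rho = \Tr(P\rho) - \Tr^2(P\rho)$ and likewise for $Q$; that is, each variance is a function only of the two expectation values $p:=\Tr(P\rho)$ and $q:=\Tr(Q\rho)$, namely $\Delta P = p(1-p)$ and $\Delta Q = q(1-q)$. Therefore the uncertainty region is entirely determined by the \emph{joint attainable set} of expectation pairs $(p,q)=(\Tr P\rho,\Tr Q\rho)$. The key observation is that this joint numerical-range-type set is governed by the same pair of commuting-or-$2\times2$ blocks, and I would argue that its extreme points (the boundary responsible for the full region $R_m$) are already reached by pure states. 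The map $\rho\mapsto(\Tr P\rho,\Tr Q\rho)$ is linear, so it sends the convex state space onto a convex set whose extreme points are images of extreme points of the state space, i.e.\ pure states; but crucially I must show the \emph{variance} image — not merely the expectation image — of every mixed state is recovered, so I would show directly that for any $(p,q)$ attainable by a mixed state there is a pure state giving the same $(p,q)$.

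\medskip

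The main obstacle is exactly this last point: convexity gives the expectation region from pure states, but a mixed state can produce an expectation pair $(p,q)$ in the \emph{interior}, and I must exhibit a \emph{pure} state with the identical pair so that the variances coincide. I expect to handle this by working block-by-block: within a single $2\times 2$ block the attainable $(p,q)$ from pure states already fills a two-dimensional region (an ellipse's interior via the Bloch ball projection), and by choosing which block to ``place'' a pure state in, together with a continuity/connectedness argument across the trivial commuting blocks, every expectation pair realizable by a global mixed state is realizable by a single pure state supported appropriately. Concretely, I would show the pure-state expectation set per block is already the full rectangle-constrained region $\{(p,q)\}$ that mixing over blocks can only re-fill, so no mixing enlarges the attainable $(p,q)$. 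Once the expectation sets match, the identity $\Delta P = p(1-p)$, $\Delta Q=q(1-q)$ transfers the equality to the variance regions, yielding $R_p = R_m$.
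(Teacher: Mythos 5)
Your proposal is correct and rests on the same two pillars as the paper's proof: (i) since $P^2=P$, the variance is a function of the expectation alone, $\Delta P=p(1-p)$ with $p=\Tr(P\rho)$, so everything reduces to showing that the joint expectation set $\{(\Tr(P\rho),\Tr(Q\rho))\}$ is the same for pure and mixed states; and (ii) this holds because the pure-state expectation set is already convex, so taking convex combinations (i.e.\ mixing) adds nothing. Where you differ is in how (ii) is established. The paper simply sets $A=P+iQ$ and invokes the Hausdorff--Toeplitz theorem: the numerical range $W(A)$ is convex and hence equals $\{\Tr(\rho A)\}$, which is a two-line argument valid in any dimension. You instead propose the Halmos/Jordan two-projection block decomposition and a block-by-block analysis; this is more constructive (it exhibits the pure state realizing a given $(p,q)$) and meshes with Lemma 1 used elsewhere in the paper, but it is longer and the step you describe as a ``continuity/connectedness argument'' should be replaced by the precise statement that a global pure state $\ket{\psi}=\sum_j c_j\ket{\psi_j}$ supported across blocks yields $(p,q)=\sum_j |c_j|^2 (p_j,q_j)$, so the global pure-state expectation set is the convex hull of the union of the per-block sets, each of which is convex (a filled ellipse for a $2\times2$ block --- the shadow of the Bloch sphere under a linear map, not the full rectangle --- or a single point for a $1\times1$ block); since a convex hull of a convex set is itself, mixing cannot enlarge it. With that step made explicit your argument closes, and it amounts to an elementary proof of the convexity of $W(P+iQ)$ specialized to two projections.
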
However, this is not true generally and we leave the proof of this theorem in Supplemental Material \cite{Supplementary}.

For infinite-dimensional system, we are interested in the joint distribution of $\Delta x$ and $\Delta p$. Generally, the physical system might be complex and it is not easy to characterize the exact uncertainty region. Here, we only consider the wave packet for free particles.
\begin{theorem}
For one-dimensional free particles, two uncertainty regions for pure states and mixed states are the same:
\begin{align*}
R_p=R_m=\{(x,y):xy\geq \frac{\hbar}{2},x,y>0\}.
\end{align*}
\end{theorem}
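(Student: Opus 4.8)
The plan is to establish the two set equalities through the sandwich
$$S\;\subseteq\;R_p\;\subseteq\;R_m\;\subseteq\;S,\qquad S:=\Bigl\{(x,y):xy\ge\tfrac{\hbar}{2},\ x,y>0\Bigr\},$$
in which the middle inclusion $R_p\subseteq R_m$ is free, since every pure state is a rank-one density operator. It then remains to prove the outer inclusion $R_m\subseteq S$ and the surjectivity $S\subseteq R_p$; chaining these forces $R_p=R_m=S$. Throughout I would restrict, as the ``free particle'' phrasing invites, to states in $L^2(\mathbb{R})$ with finite second moments, so that every variance quoted is well defined and no boundary conditions intervene.

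For $R_m\subseteq S$ I would invoke Robertson's inequality in its general form, valid for an arbitrary (possibly mixed) state with finite second moments: it bounds the product $\Delta x_\rho\,\Delta p_\rho$ below by the state-independent commutator contribution of $[x,p]=i\hbar$, which reproduces exactly the lower-bounding hyperbola defining $S$. Strict positivity of both coordinates is then automatic, because a product bounded below by a positive constant cannot have either factor vanish; equivalently, neither $x$ nor $p$ possesses a normalizable eigenstate. Hence every attainable pair $(\Delta x_\rho,\Delta p_\rho)$ lies in $S$, giving $R_m\subseteq S$ and a fortiori $R_p\subseteq S$.

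The real content is the reverse inclusion $S\subseteq R_p$: each point of $S$ must be realized by a genuine normalizable pure state. Here I would exploit the dilation unitary $\psi(x)\mapsto\sqrt{\lambda}\,\psi(\lambda x)$, $\lambda>0$, under which $\Delta x\mapsto\lambda^{-2}\Delta x$ and $\Delta p\mapsto\lambda^{2}\Delta p$, so that the \emph{product} $\Delta x\,\Delta p$ is left invariant while the pair sweeps out the entire branch of the hyperbola $\{xy=k\}$ in the open first quadrant as $\lambda$ ranges over $(0,\infty)$. Thus all of $S$ is covered once one shows that the set of achievable uncertainty \emph{products} of pure states is exactly $[\tfrac{\hbar}{2},\infty)$. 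The left endpoint is attained by any Gaussian wave packet (the minimum-uncertainty state); to reach every larger value I would take a continuous family of non-minimal states, for instance $\cos\theta\,|0\rangle+\sin\theta\,|2k\rangle$ built from Hermite--Gauss (harmonic-oscillator) modes, whose even parity keeps $\langle x\rangle=\langle p\rangle=0$, and apply the intermediate value theorem: as $\theta$ and $k$ vary the product moves continuously from the minimum up to arbitrarily large values, filling $[\tfrac{\hbar}{2},\infty)$ with no gaps. Composing this with the dilation flow then hits every point of $S$.

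I expect the surjectivity step to be the main obstacle, and within it the delicate point is proving that the achievable products form the \emph{full} interval $[\tfrac{\hbar}{2},\infty)$ rather than a discrete or bounded subset; the continuity-plus-IVT construction is precisely what secures this. A secondary care, easy to overlook, is to remain inside the class of $L^2$ states with finite second moments at every stage, and to verify that both the dilations and the finite Hermite--Gauss superpositions preserve this finiteness so that the quoted variances, and hence the plane coordinates, stay meaningful.
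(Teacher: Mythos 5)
Your proposal is correct, and its overall skeleton --- the sandwich $S\subseteq R_p\subseteq R_m\subseteq S$, with the outer inclusion $R_m\subseteq S$ supplied by the Robertson--Kennard bound and all the real work in the surjectivity $S\subseteq R_p$ --- is exactly the paper's. Where you genuinely diverge is in how you realize every point of $S$ by a pure state. The paper uses a single explicit two-parameter family: the Gaussian wave packet of width $a$ evolved for time $t$ under the free Hamiltonian, for which direct computation gives $\Delta p = \hbar/(\sqrt{2}\,a)$ (constant in $t$) and $\Delta x = \sqrt{a^2/2+\hbar^2t^2/(2m^2a^2)}$ growing from the minimum-uncertainty value to infinity; varying $a$ and $t$ sweeps $S$ by horizontal rays emanating from the hyperbola. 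You instead foliate $S$ by the hyperbolas $\{xy=k\}$, cover each leaf with the dilation group acting on a fixed state, and reduce the problem to showing the achievable products fill $[\tfrac{\hbar}{2},\infty)$, which you get from Hermite--Gauss superpositions and the intermediate value theorem. Your route buys independence from the dynamics --- it makes clear the theorem is a purely kinematic statement about the state space and would survive replacing ``free particle'' by any system on $L^2(\mathbb{R})$ --- and it sidesteps solving the Schr\"odinger equation; the paper's route buys completely explicit closed-form witnesses at the cost of the ``tedious calculations'' it admits to. The one point you should make fully explicit if you write this up is the continuity of the variance product along your interpolating family, but since finite superpositions of Hermite--Gauss modes have all moments depending polynomially on the superposition coefficients, this is immediate, and you have already flagged the need to stay within states of finite second moments.
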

The proof of this theorem is completed by considering the one-dimensional Gaussian wave packet. For the tedious calculations, we also leave it in the Supplemental Material \cite{Supplementary}.

Next, we start to give the analytical characterization of the areas of UR for a pair of projective operators. Before that, we introduce the following famous Jordan's lemma \cite{Scarani2019} used in the characterization.
\begin{lemma}
\label{lemma:1}
Let $P$ and $Q$ be two projection operators in $\mathbb{C}^d$. Then there exists a basis of $\mathbb{C}^d$ in which $P$ and $Q$ are simultaneously block diagonal, with $l$ blocks of size one or two such that either (for one-dimensional blocks)
\begin{align*}
P_i,Q_i\in\{(0),(1)\},
\end{align*}
or (for two-dimensional blocks)
\begin{align}
\label{eqn:P_i_Q_i}
P_i=\begin{pmatrix}1   & 0\\0  &  0\end{pmatrix},Q_i=\begin{pmatrix}\cos^2\theta_i & \cos\theta_i\sin\theta_i\\ \cos\theta_i\sin\theta_i & \sin^2\theta_i\end{pmatrix}
\end{align}
with $\theta_1,\cdots,\theta_l\in(0,\frac{\pi}{2}]$.
\end{lemma}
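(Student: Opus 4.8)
The plan is to organize the entire decomposition around a single positive Hermitian operator, $H := PQP$. First I would observe that $H$ maps $\operatorname{ran}(P)$ into itself and annihilates $\ker(P)$, so it suffices to understand $H$ restricted to $\operatorname{ran}(P)$. There it is positive with spectrum contained in $[0,1]$, since $H=(QP)^\dagger(QP)$ (using $Q^2=Q$) and $\langle v,Hv\rangle=\|QPv\|^2\le\|v\|^2$. Diagonalizing this restriction splits $\operatorname{ran}(P)$ into eigenspaces, and I would treat the three cases $\lambda=1$, $\lambda=0$, and $\lambda\in(0,1)$ separately. The value $\lambda=1$ forces $Qv=v$, producing one-dimensional blocks with $P_i=Q_i=(1)$; the value $\lambda=0$ forces $Qv=0$, producing one-dimensional blocks with $P_i=(1)$, $Q_i=(0)$.

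The heart of the argument is the case $\lambda\in(0,1)$. For such a normalized eigenvector $v$ (so $Pv=v$), I would set $w:=Qv$ and claim that $\spn\{v,w\}$ is a two-dimensional subspace invariant under both $P$ and $Q$. Invariance is a short computation: $Pv=v$, $Pw=PQPv=\lambda v$, $Qv=w$, and $Qw=Q^2v=w$; linear independence of $v,w$ follows because $w=cv$ would make $v$ an eigenvector of $Q$, contradicting $\lambda\in(0,1)$. Within this block both $P$ and $Q$ act as rank-one projections, so choosing the orthonormal basis whose first vector is $v$ puts $P$ in the form $\operatorname{diag}(1,0)$ and $Q$ as the projection onto a line at angle $\theta_i$, with $\cos^2\theta_i=\lambda$ and hence $\theta_i\in(0,\tfrac{\pi}{2})$.

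The main obstacle is the global bookkeeping: I must show that these two-dimensional blocks, ranging over all eigenvectors, are mutually orthogonal and orthogonal to the one-dimensional blocks, and that together they exhaust $\mathbb{C}^d$. Orthogonality reduces to checking $\langle v,w'\rangle=\langle w,w'\rangle=0$ for eigenvectors $v,v'$ of $H$ with distinct eigenvalues $\lambda,\lambda'$; using $Pv=v$, $Pv'=v'$ and Hermiticity, each inner product collapses to $\langle v,w'\rangle=\langle v,PQPv'\rangle=\lambda'\langle v,v'\rangle=0$, and then $\langle w,w'\rangle=\langle v,Q^2v'\rangle=\langle v,w'\rangle=0$. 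For completeness I would let $W$ be the span of all constructed blocks; since $W$ is a sum of $P,Q$-invariant subspaces, $W^\perp$ is again invariant under both projections. Because the eigenvectors $v$ span $\operatorname{ran}(P)$, we have $W\supseteq\operatorname{ran}(P)$, so $P$ vanishes on $W^\perp$; diagonalizing $Q|_{W^\perp}$ then yields the remaining one-dimensional blocks with $P_i=(0)$ and $Q_i\in\{(0),(1)\}$, completing the decomposition.

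Finally, I would remark that the boundary value $\theta_i=\pi/2$ appearing in the statement corresponds to a pair $\operatorname{diag}(1,0),\operatorname{diag}(0,1)$, which is actually reducible to two one-dimensional blocks; including it in the two-dimensional list is harmless, while the genuine two-dimensional blocks produced above always satisfy $\theta_i\in(0,\tfrac{\pi}{2})$.
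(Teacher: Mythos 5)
Your proof is correct and complete: the spectral analysis of $H=PQP$ on $\operatorname{ran}(P)$, the case split $\lambda\in\{0\},\{1\},(0,1)$, the orthogonality and exhaustion arguments, and the final treatment of $W^\perp$ all check out, and this is the standard argument for Jordan's lemma. Note that the paper itself offers no proof of this statement --- it simply cites it as a known result from the literature \cite{Scarani2019} --- so there is nothing in the paper to compare your argument against; your write-up in fact supplies a proof the paper omits.
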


For qubit system, we only need to care about rank 1 projectors. This is because for any Hermitian observable $O_i$, there exists a $\lambda_i\in\mathbb{R}$ such that $O_i-\lambda_i I$ is rank 1 and $R_m$ and $R_p$ for a pair of qubit projectors can be easily obtained, as shown in the following. Thus by using this lemma and the qubit case, we can compute $R_m$ and $R_p$ of two projective observables for higher dimensional systems. For the convenience of the following clarification, we denote the $d$-dimensional URs of two projective observables as $R^{(d)}$, and $R^{(d)}_p$ or $R^{(d)}_m$ for pure or mixed states, respectively.

Without loss of generality, we consider two trace one qubit projectors $A$ and $B$ written in the Bloch representation that $A=\frac{1}{2}(I + \vec{a}\cdot\vec{\sigma}),B=\frac{1}{2}(I + \vec{b}\cdot\vec{\sigma})$, where $\vec{a}=(a_x,a_y,a_z),\vec{b}=(b_x,b_y,b_z)$ are qubit Bloch vectors and $\vec{\sigma}=\sigma_x\vec{i}+\sigma_y\vec{j}+\sigma_z\vec{k}$ is the Pauli operator vector. Since $R_m=R_p$, we only considering an arbitrary density matrix $\rho=\frac{1}{2}(I + \vec{r}\cdot\vec{\sigma})$ with Bloch vector $\vec{r}=(r_x,r_y,r_z)$. According to Lemma \ref{lemma:1}, $\vec{a}$ and $\vec{b}$ can always be written in a basis such that $\vec{a}=(0,0,1)$ and $\vec{b}=(\sin2\theta,0,\cos2\theta)$, so that $A,B$ have the exact form of $P_i,Q_i$ in Eq. (\ref{eqn:P_i_Q_i}) and we define $\theta$ as the angle between $A$ and $B$. For two known projectors, the only constraint that restricts the obtainable region of $(\Delta A,\Delta B)$ is that $r^2\leq1$. Thus by this constraint, $R^{(2)}$ can be derived as the union set of the following two regions (see the details of derivation in Supplemental Material \cite{Supplementary})
\begin{align}
              \label{eqn:R_1}
	R_1^{(2)}:
	\left\{
	\begin{array}{lr}
	\Delta A+\Delta B\geq (1+\cos^22\theta)/4,&\\
	\Delta A,\Delta B\leq 1/4,
	\end{array}
	\right.
	\end{align}
	\begin{align}
       \label{eqn:R_2}
	R_2^{(2)}:
	\left\{
	\begin{array}{lr}
	\Delta A+\Delta B\leq(1+\cos^22\theta)/4,&\\
	64x_1^2/(1+\cos4\theta)+64y_1^2/(1-\cos4\theta)\leq1,
	\end{array}
	\right.
	\end{align}
where $x_1$ and $y_1$ are
	\begin{align*}
		\left( \begin{array}{ccc}
		x_1 \\
		y_1
		\end{array}
		\right)=
		\left(
		\begin{array}{ccc}
		\cos{\frac{\pi}{4}} & \sin{\frac{\pi}{4}} \\
		-\sin{\frac{\pi}{4}} & \cos{\frac{\pi}{4}}
		\end{array}
		\right)
		\left(
		\begin{array}{ccc}
		\Delta A-{1}/{8} \\
		\Delta B-{1}/{8}
		\end{array}
		\right).
	\end{align*}
It is obviously that $R_1^{(2)}$ represents a right triangle in the $\Delta A-\Delta B$ plane while $R_2^{(2)}$ is part of an ellipse with one of its axis tilted at $\theta=\frac{\pi}{4}$ against the $\Delta A$ coordinate axis and centered at $({1}/{8},{1}/{8})$, as show in Fig. \ref{fig:qubit_th:a}.
 The UR $R^{(2)}$ is composed of the union of $R_1^{(2)}$ and $R_2^{(2)}$. It also can be seen that points on the boundary of this ellipse can only be achieved by states on the equatorial plane of the Bloch ball satisfying $r_x^2+r_z^2=1$. We call these states as "boundary states", which generate most of boundary points of $R^{(2)}$. As $\theta$ grows larger, the long axis of ellipse shrinks while the minor axis gets longer. When $\theta=\pi/4$, the original long axis become 0 and the ellipse become a line, as depicted in Fig. \ref{fig:qubit_th:a}. As for the point $(1/4,1/4)$ in $R_1^{(2)}$, it can be achieved by the maximum mixed state or pure state with Bloch vector $\vec{r}$ perpendicular to $\vec{a}$ and $\vec{b}$, indicating that $A$ and $B$ achieve maximum uncertainty simultaneously. For other points in $R_1^{(2)}$, it is also achievable for pure states.

As mentioned above, in the qubit case, this characterization is not limited to projective observables. The UR of two Hermitian observable $O_1,O_2$ can be characterized by transform them to two projective observables through $O_1-\lambda_1I$ and $O_2-\lambda_2I$, which share the same UR.
\begin{figure*}
  \centering
  \addtocounter{figure}{1}
  \subfigure[]{
    \label{fig:qubit_th:a} 
    \begin{minipage}[b]{1\textwidth}
    \includegraphics[width=0.23\textwidth]{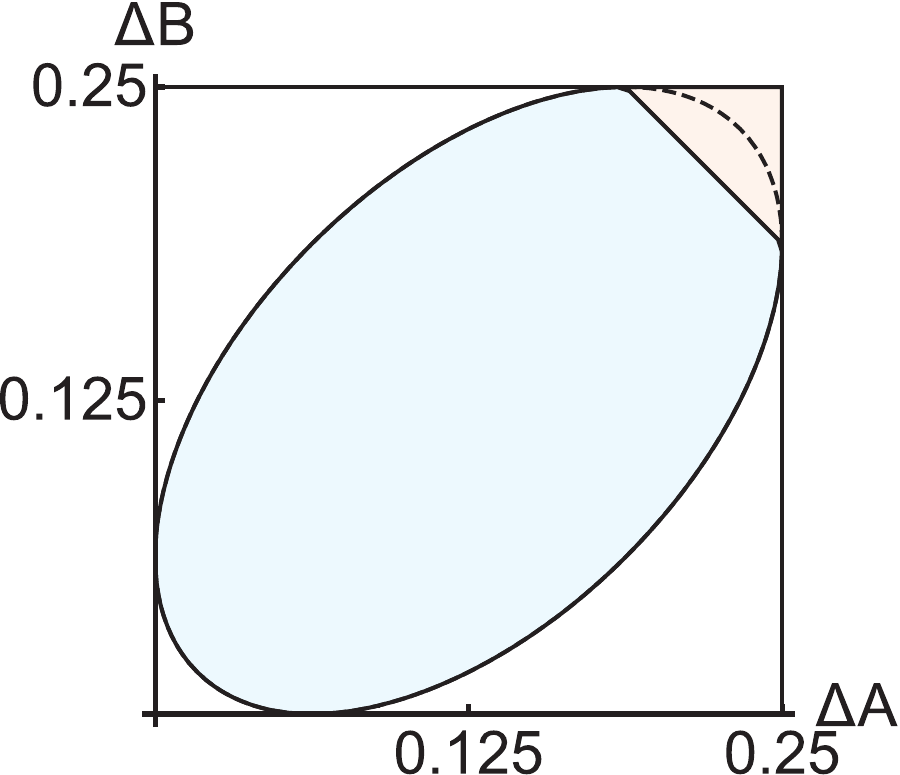}
    \includegraphics[width=0.23\textwidth]{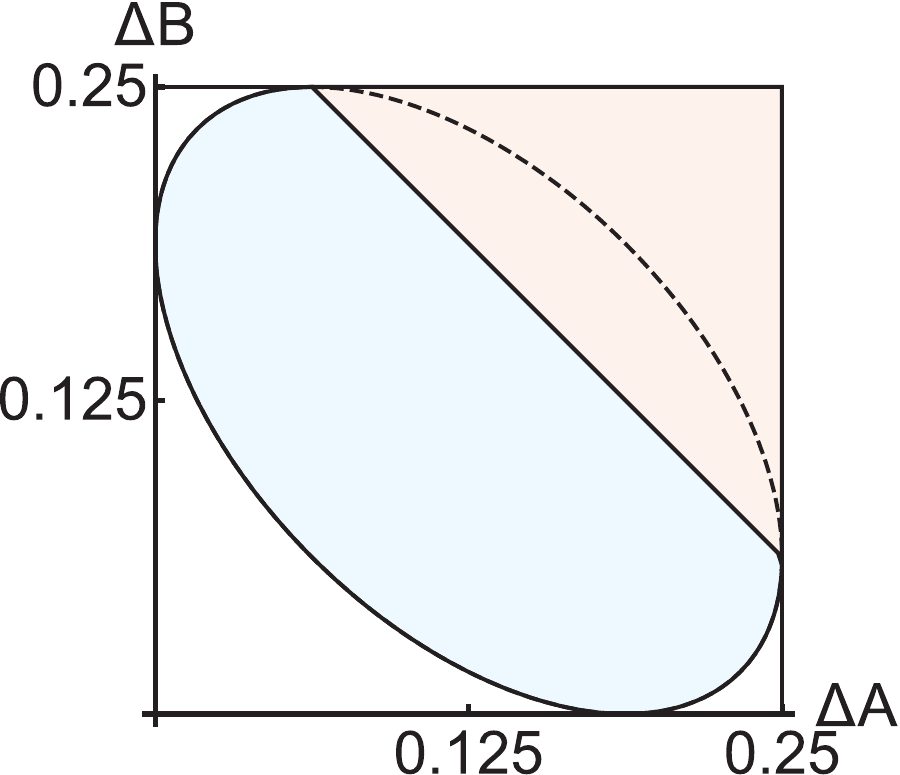}
    \includegraphics[width=0.23\textwidth]{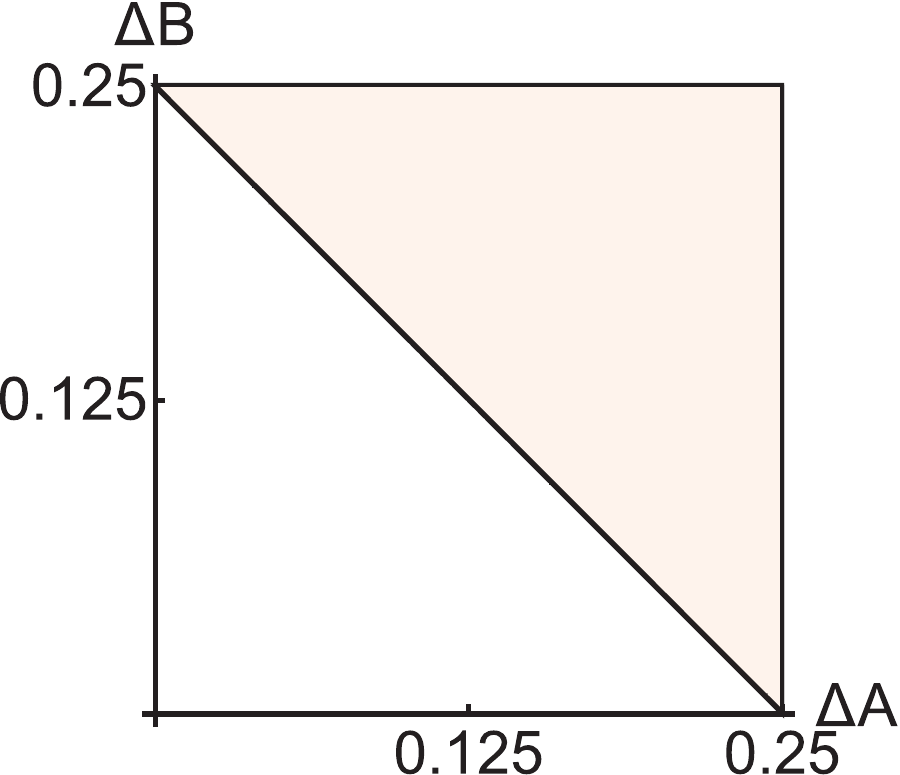}
    \includegraphics[width=0.23\textwidth]{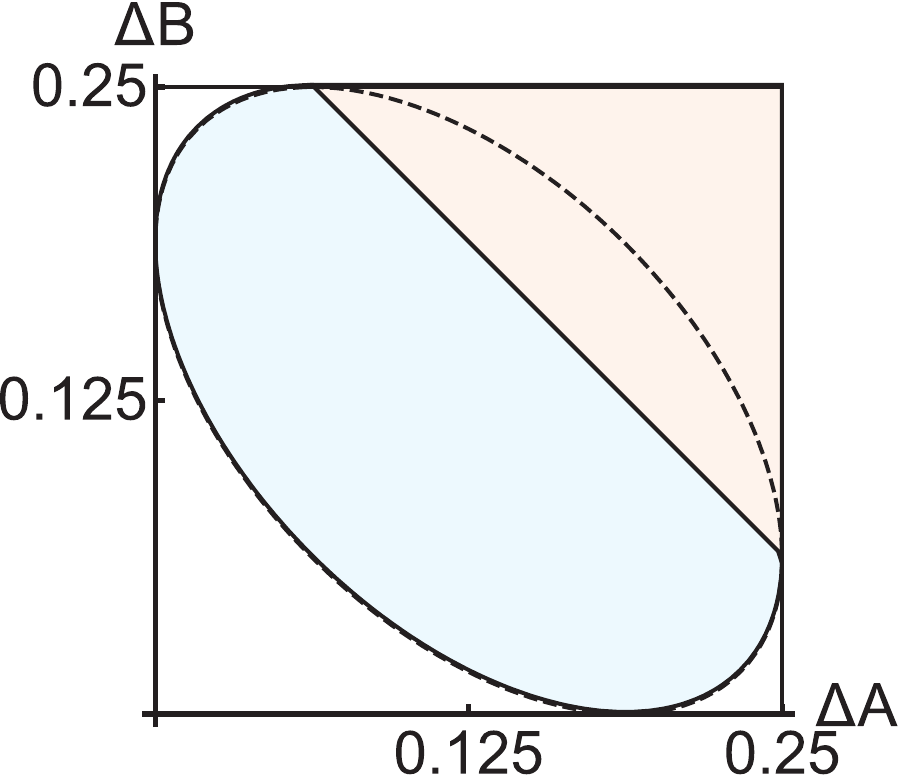}
    \end{minipage}
    }
  \subfigure[]{
    \label{fig:qutrit_th:b} 
    \begin{minipage}[b]{1\textwidth}
    \includegraphics[width=0.23\textwidth]{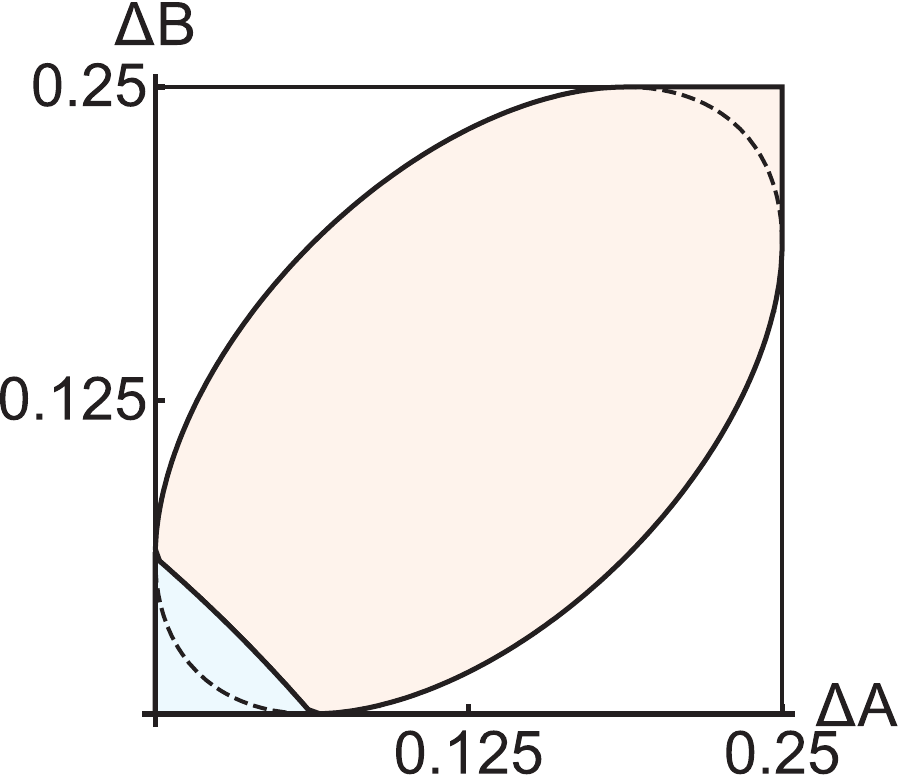}
    \includegraphics[width=0.23\textwidth]{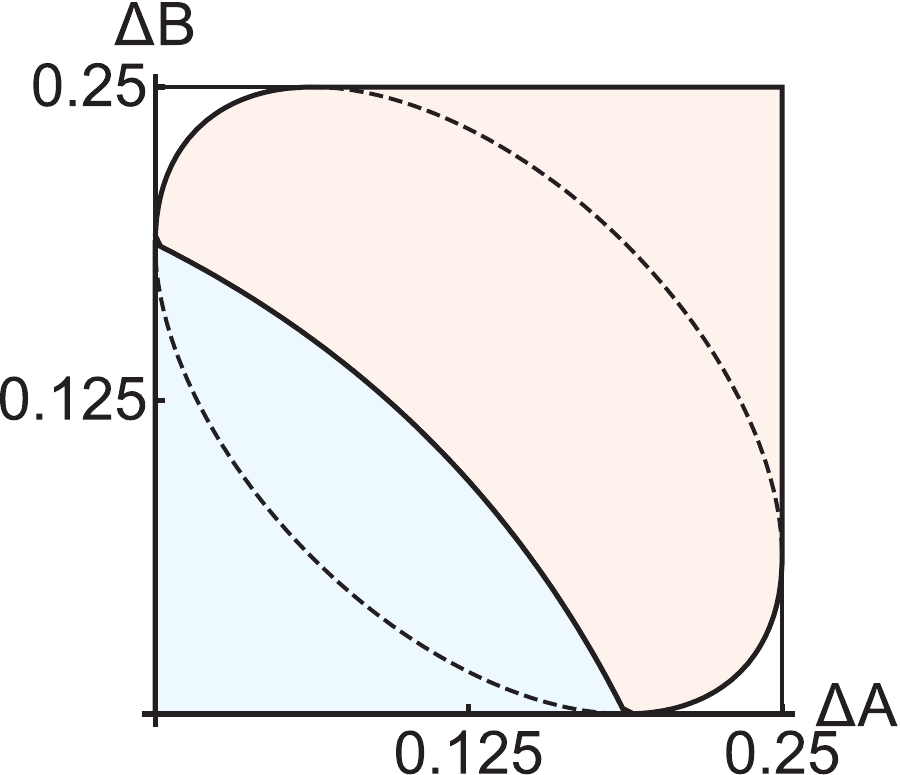}
    \includegraphics[width=0.23\textwidth]{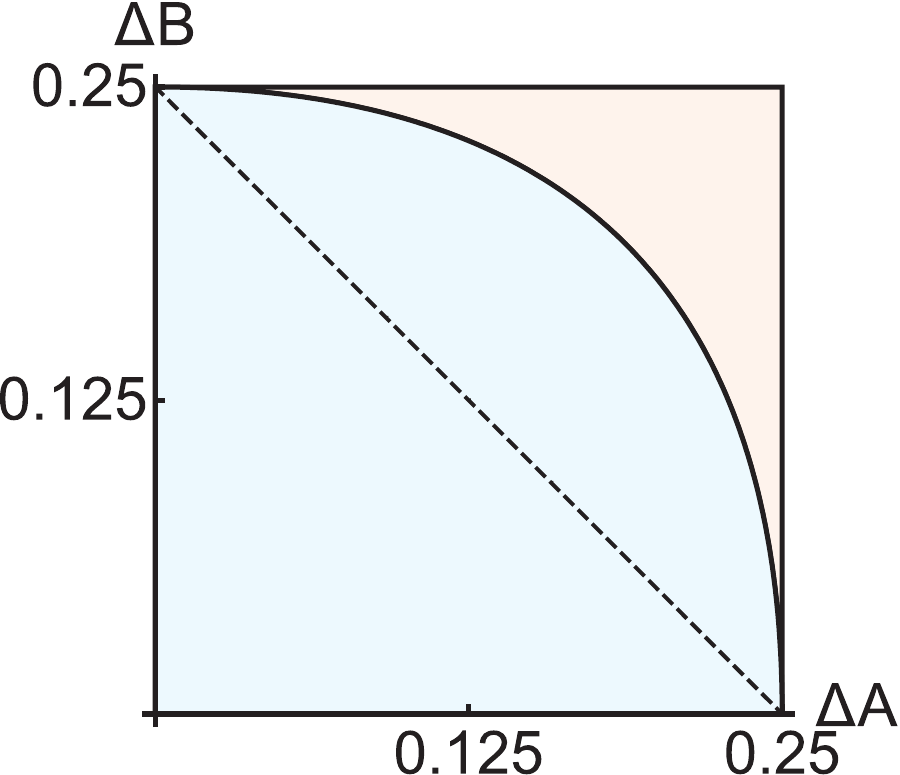}
    \includegraphics[width=0.23\textwidth]{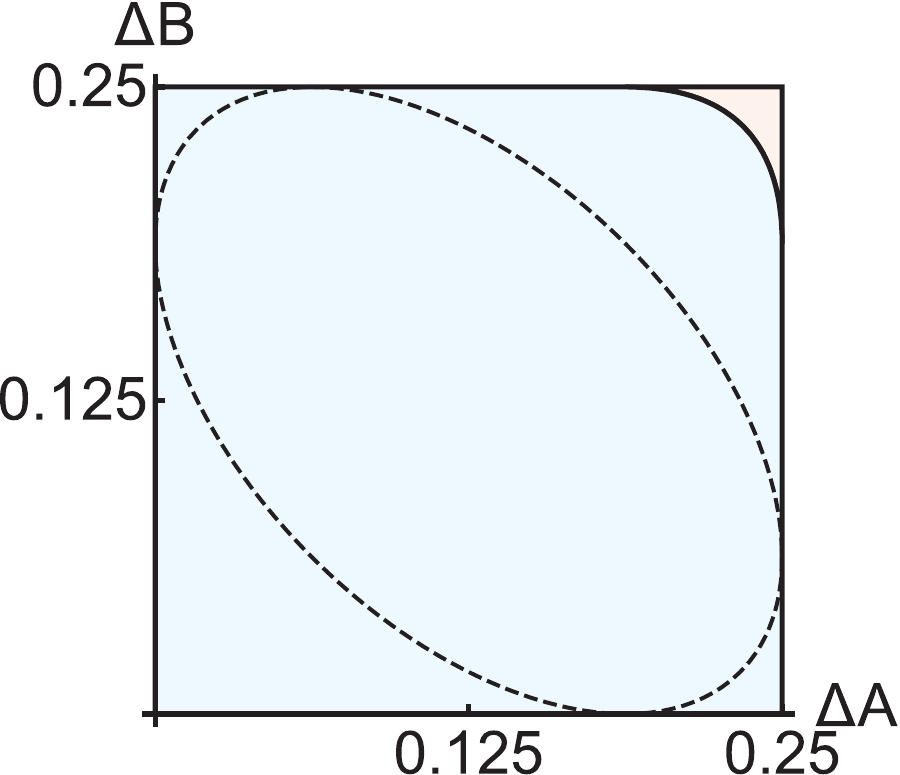}
    \end{minipage}
    }
  \addtocounter{figure}{-1}
  \caption{Theoretical uncertainty regions of a pair of (a) qubit and (b) qutrit projective observables with different included angles $\theta={\pi}/{12},{\pi}/{6},{\pi}/{4},{\pi}/{3}$ (from left to right). $R_1$ and $R_2$ are filled with light yellow and light blue colors, respectively.}
  \label{fig:results_theory} 
\end{figure*}

With the above results of $R^{(2)}$ for qubit projectors and applying Lemma \ref{lemma:1}, the problem of finding $R^{(d)}(d\geq3)$ can be reduced into qubit subspace of higher dimensional systems. More specifically, we can always find a basis in which arbitrary two rank 1 projective operators can be written as
	\begin{align}
    \label{A_B_d}
	A=\begin{pmatrix}
	1 & 0 & \quad \\
	0 & 0 & \quad \\
	\quad & \quad & \ddots
	\end{pmatrix},
    B=\begin{pmatrix}\cos^2\theta & \cos\theta\sin\theta & \quad \\ \cos\theta\sin\theta & \sin^2\theta & \quad \\ \quad & \quad & \ddots
	\end{pmatrix},
	\end{align}
where all the blank space equal to 0, which indicate that the two observables are reduced to a qubit subspace. Thus the density matrix $\rho$ can also be equivalently written in the qubit subspace, by setting all the elements equal to 0 apart from the $2\times2$ block in the left upper corner, as follows:
	\begin{align}
    \label{rho_d}
	\rho = \begin{pmatrix} \rho^{(2)} \\
	& & \ddots
	\end{pmatrix}.
	\end{align}
Here $\rho^{(2)}$ is the nonzero $2\times2$ block representing an unnormalized qubit density matrix, for $\rho^{(2)}$ no longer satisfies the trace one requirement but still need to be positive. Therefore, we can write $\rho^{(2)}$ as
	\begin{equation*}
      \label{rho_2}
	  \rho^{(2)}=\frac{1}{2}(\alpha I+\vec{r}\cdot\vec{\sigma}),r^2=1
	\end{equation*}
with $0\leq\alpha\leq1$, and the positivity of $\rho^{(2)}$ suggests $r^2\leq\alpha^2$. Here $\alpha=\Tr (\rho^{(2)})$ is the probability that $\rho$ been projected onto the qubit subspace. From the above statements in Eq. (\ref{A_B_d}) and Eq. (\ref{rho_d}), it is obvious that all the calculations can be reduced in the qubit subspace. The only difference is that the density matrix $\rho^{(2)}$ become an unnormalized one. Similar to the qubit case, by using the constraint $r^2\leq\alpha^2$, we can derive the following two regions that form $R^{(d)}$ (see Supplemental Material for more details \cite{Supplementary}),
\begin{align}
              \label{eqn:R_1_d3}
	R_1^{(d)}:
	\left\{
	\begin{array}{lr}
	2-\sqrt{1-4\Delta A}-\sqrt{1-4\Delta B}\geq2\sin^2{\theta},&\\
    \quad  \\
    \pm2\cos2\theta\sqrt{(1-4\Delta A)(1-4\Delta B)} \\
    -4(\Delta A+\Delta B)+(1+\cos^22\theta)\leq0,
	\end{array}
	\right.
	\end{align}
	\begin{align}
       \label{eqn:R_2_d3}
	R_2^{(d)}: 0\leq2-\sqrt{1-4\Delta A}-\sqrt{1-4\Delta B}<2\sin^2{\theta}.
\end{align}
The second inequality in Eq. (\ref{eqn:R_1_d3}) gives the exact region of $R^{(2)}$ and corresponds to the situation where $\alpha=1$ in qutrit and higher dimensional cases. The first inequality in Eq. (\ref{eqn:R_1_d3}) cuts part of the ellipse in $R^{(2)}$ and forms the concave curve in $R_1^{(d)}$, which is part of a parabolic curve, as shown in Fig. \ref{fig:qutrit_th:b}. While $R_2^{(d)}$ gives the bottom left corner, which can be derived from states with $\alpha<1$. $R_1^{(d)}$ and $R_2^{(d)}$ together forms the whole uncertainty region of $R^{(d)}$. When $\alpha=0$, the region is reduced to the original point $(0,0)$, which is intuitive since this state can never be projected into the qubit subspace, and the measurement outcomes of $A,B$ are simultaneously determined. Compared to the qubit case, due to the unnormalization of $\rho^{(2)}$, the mainly difference of higher dimensional system is that the bottom left corner can also be obtained. $R_1^{(d)}$ behaves as the same way as in the qubit case when $\theta$ changes while $R_2^{(d)}$ become larger as $\theta$ increase. When $\theta=\pi/4$, the lower left area is fully filled and when $\theta>\pi/4$, $R_2^{(d)}$ keeps getting larger, thus the UR covers the whole box. This is another property that qubits do not share.

\begin{figure}[t]
\centering
\includegraphics[width=1\linewidth]{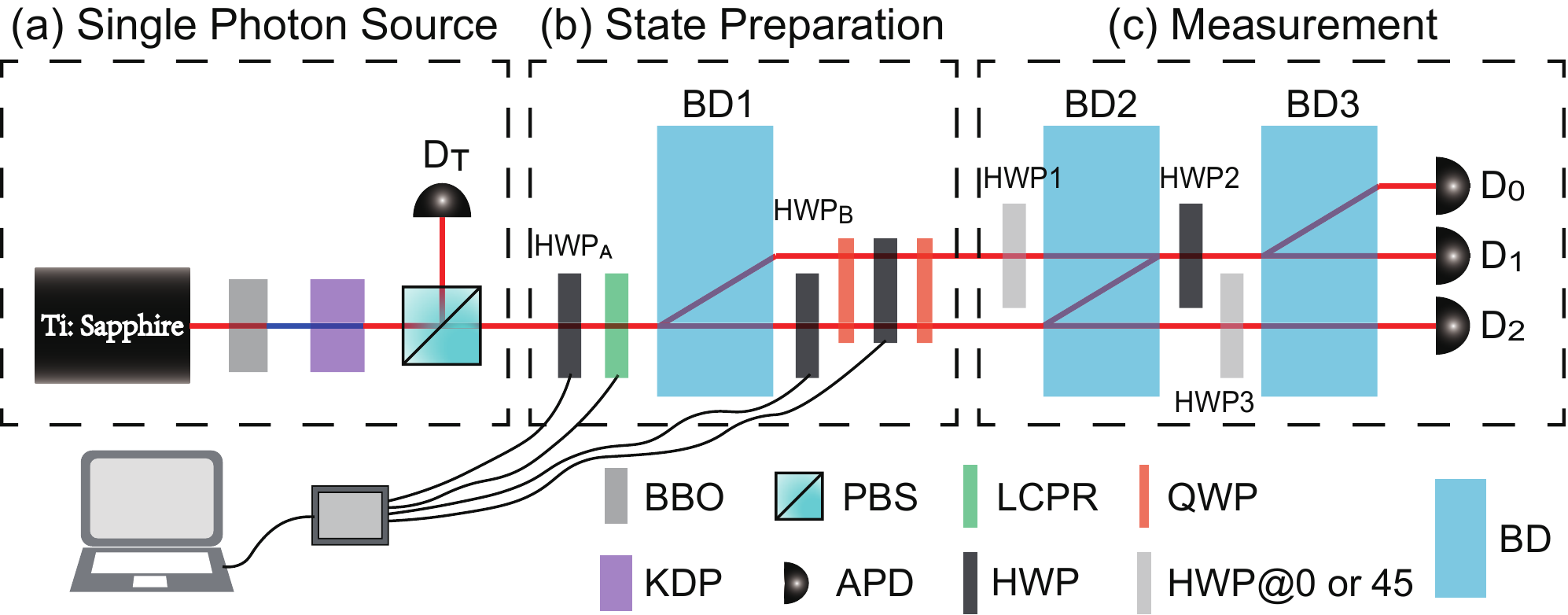}
\caption{Schematic picture of experimental set-up. (a) Frequency-doubled Ti:Sapphire laser pulses (~150fs duration, 415nm) pump a phase-matched potassium di-hydrogen phosphate (KDP) crystal to generate photon pairs. The photon pairs are then divided into signal and idler modes by a polarizing beamsplitter (PBS). Detection of one photon in the idler mode heralds a single photon in the signal mode. (b) The state preparation module is composed of two electronically-controlled half-wave plates (HWPs) ,a liquid crystal phase retarder (LCPR), a birefringent calcite beam displacer (BD) and another electronically-controlled HWP inserted in two quarter-wave plates (QWPs), which include a relative phase between horizontal and vertical polarizations. (c) Three HWPs and two BDs project the single photon into three orthogonal states, then three avalanche photodiodes (APDs) are used for detection of the single photon.}
\label{fig:exp_setup}
\end{figure}

\begin{figure*}[t]
  \centering
  \addtocounter{figure}{1}
  \subfigure[]{
    \label{fig:qubit:b} 
    \begin{minipage}[b]{1\textwidth}
    \includegraphics[width=0.23\textwidth]{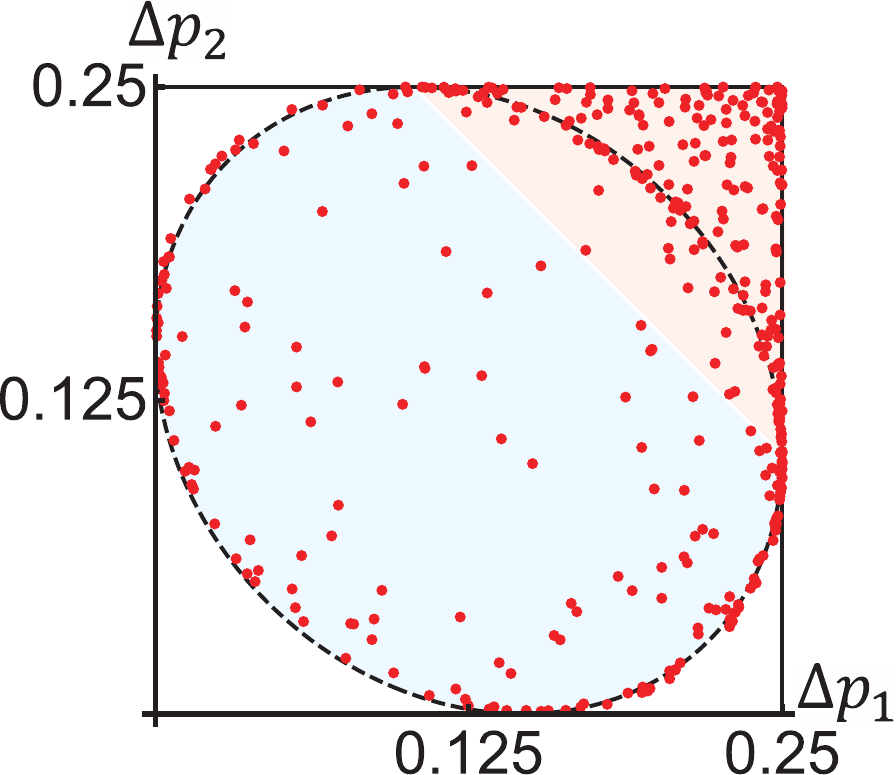}
    \includegraphics[width=0.23\textwidth]{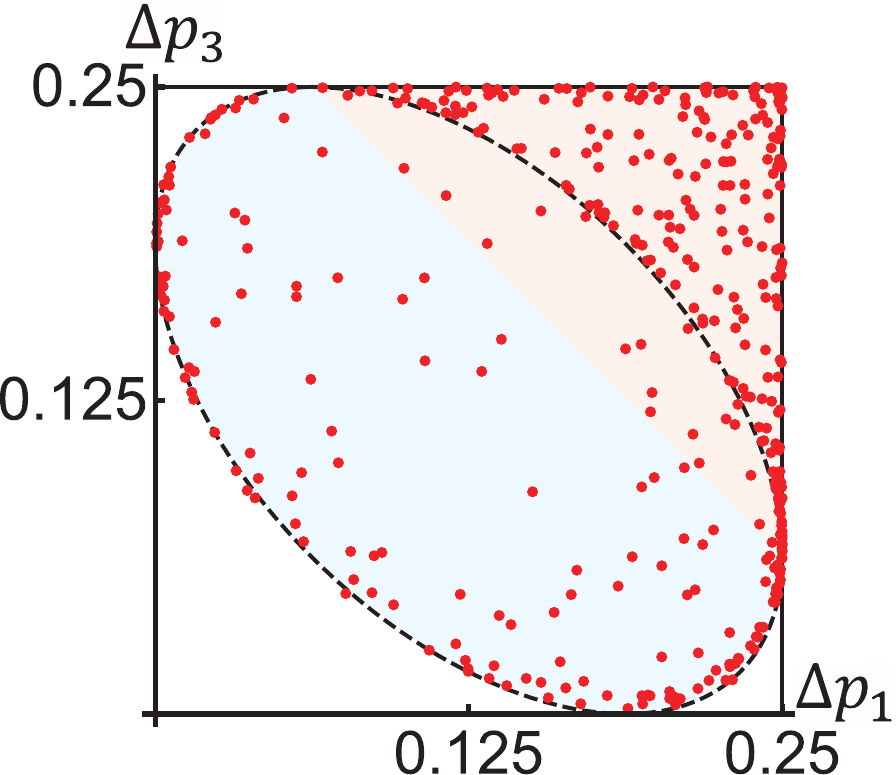}
    \includegraphics[width=0.23\textwidth]{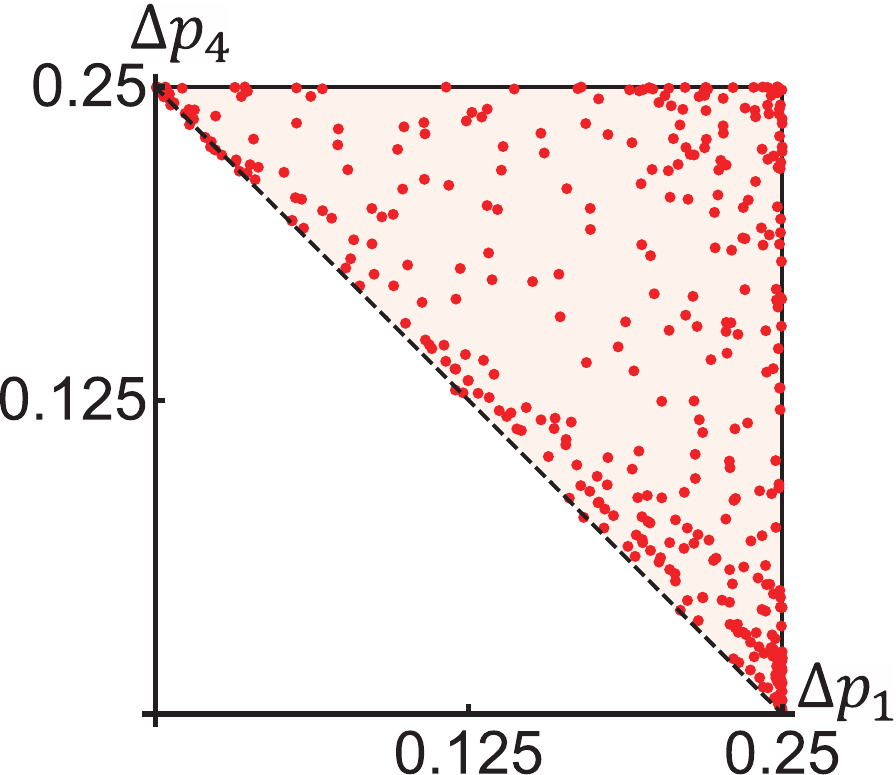}
    \includegraphics[width=0.23\textwidth]{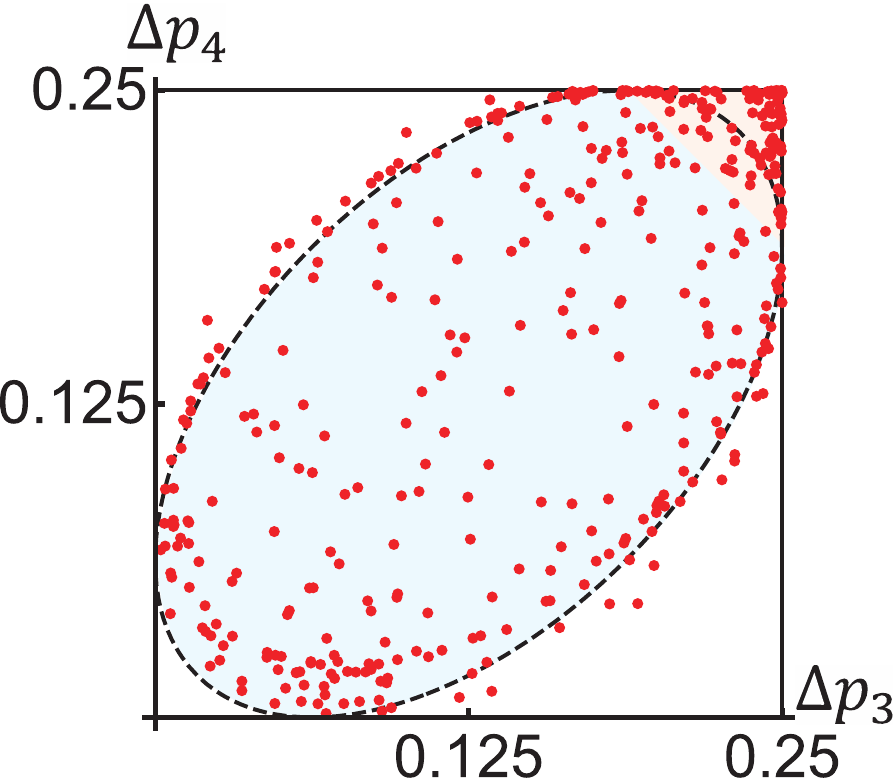}
    \end{minipage}
    }
  \subfigure[]{
    \label{fig:qutrit:a} 
    \begin{minipage}[b]{1\textwidth}
    \includegraphics[width=0.23\textwidth]{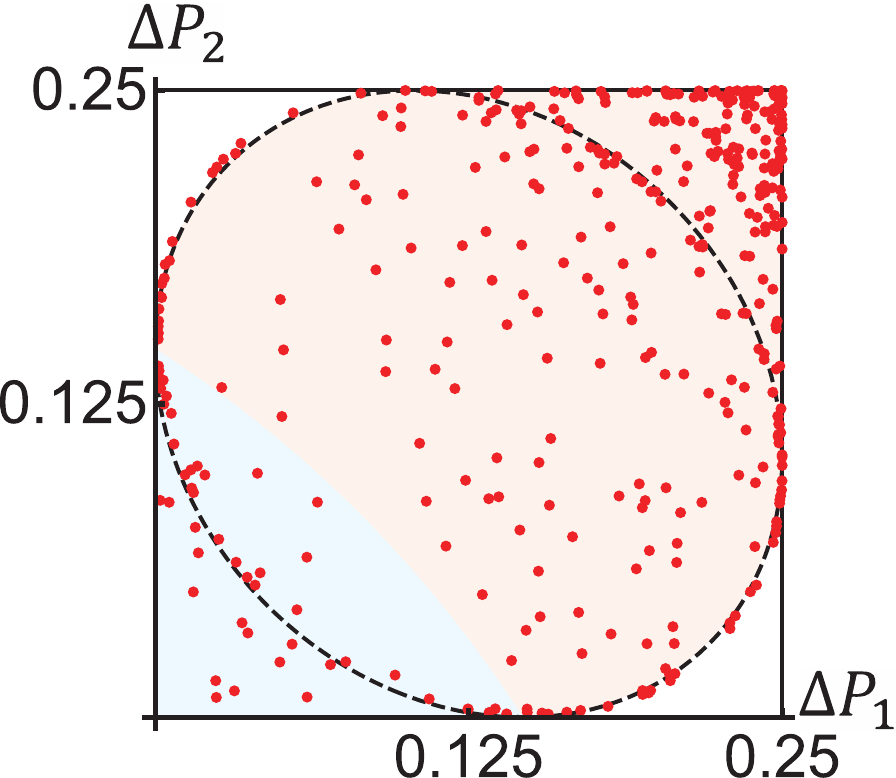}
    \includegraphics[width=0.23\textwidth]{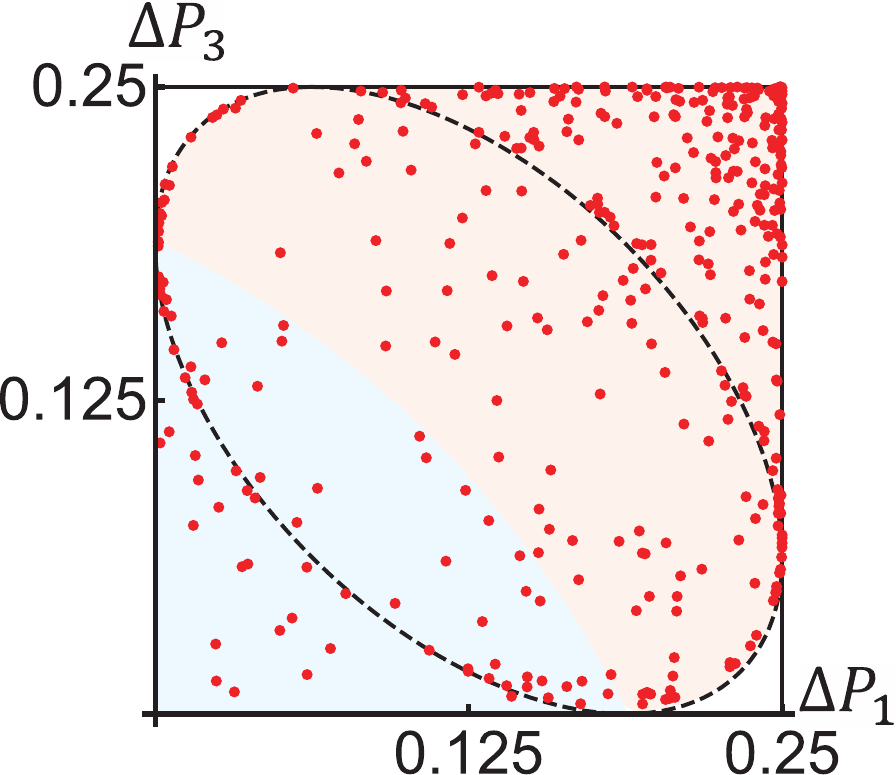}
    \includegraphics[width=0.23\textwidth]{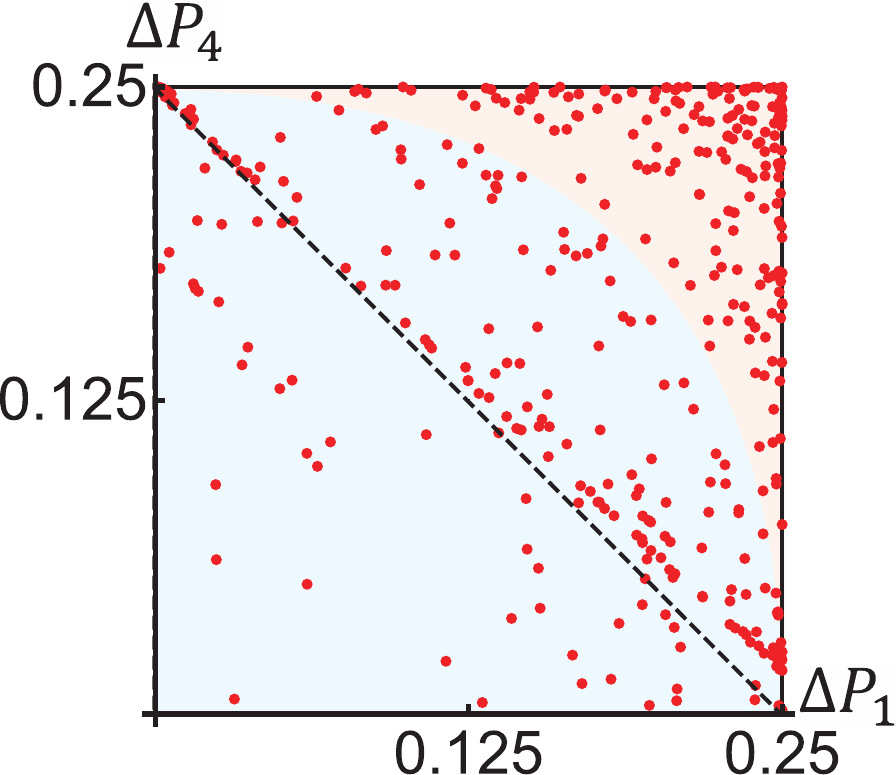}
    \includegraphics[width=0.23\textwidth]{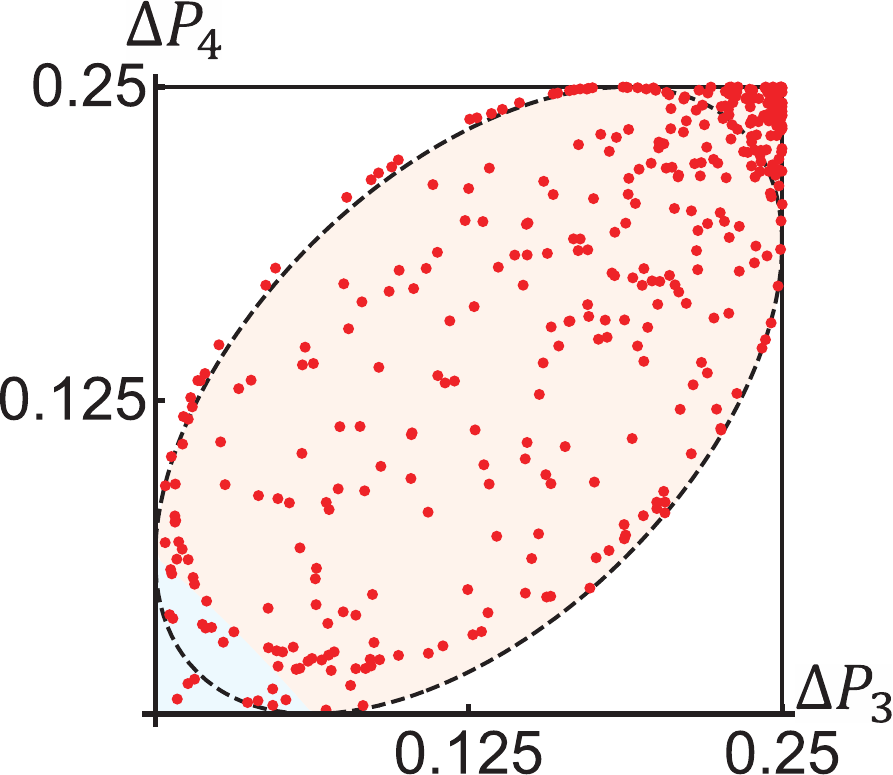}
    \end{minipage}
    }
  \addtocounter{figure}{-1}
  \caption{Experimentally observed uncertainty regions (URs) of (a) qubit and (b) qutrit with $\theta_{i,j}={5\pi}/{36},{\pi}/{6},{\pi}/{4},{\pi}/{12}$, respectively. $R_1$ and $R_2$ of theoretical URs are filled with light orange and light blue colors, respectively. The boundary of the ellipse is plotted with black dashed lines while the red points represent the experimentally observed data of 400 sampled states.}
  \label{fig:results_exp} 
\end{figure*}

\par
\textit{Experiment.---}We experimentally observe the qubit and qutrit uncertainty regions of different pairwise projective observables with photonic setups, as depicted in Fig. \ref{fig:exp_setup}. This set-up is specially designed to make the measurement of qutrit uncertainty regions compatible with the measurement of qubit ones, which can be implemented by blocking the third output port or post-selecting the qutrit measurement outcomes. When operating on the qutrit situation, single photons generated from spontaneous parametric down conversion firstly been scrambled in the superposition of horizontal and vertical polarizations by the half-wave plate (HWP) $\rm{WP_A}$, then a relative phase is introduced by the liquid crystal phase retarder (LCPR). After a birefringent calcite beam displacer (BD) and the HWP $\rm{WP_B}$, single photons are encoded in the superposition of polarization and spatial optical modes. The another phase retarder is realized by two quarter-wave plates (QWP) and a HWP, which constitute a QWP-HWP-QWP configuration and add a relative phase between horizontal and vertical polarizations. After that, an arbitrary pure qutrit state is generated (see more details in Supplemental Material \cite{Supplementary}). For example, in our experiment, we define the three eigen-modes of a qutrit state as
\begin{equation*}\label{polar_spatial_to_compu}
  \ket{0}\equiv\ket{H}\otimes\ket{s_1}, \ket{1}\equiv\ket{H}\otimes\ket{s_2},
  \ket{2}\equiv\ket{V}\otimes\ket{s_2},
\end{equation*}
where $\ket{s_1}, \ket{s_2}$ denote the spatial mode 1 (top) and spatial mode 2 (lower), $\ket{H}, \ket{V}$ denote the horizontal and vertical polarization, respectively. By changing the setting angles of $\rm{WP_A}$ and $\rm{WP_B}$ and applying appropriate parameters to the two phase retarder, different pure qutrit states can be prepared.

In the measurement part, we use two BDs and three HWPs to implement a $3\times3$ unitary transformation on the input state, then three avalanche photodiodes (APDs) are used for detection of the single photon, which as a whole equivalently performing three orthogonal projective measurements on the input state \cite{PhysRevLett.73.58}. By setting the angle of $\rm{HWP_3}$ to $0^\circ$ and the angle of $\rm{HWP_2}$ to $\theta_2$, the unitary transformation acts only on the first two eigen-modes and leave the third mode unchanged. Thus by changing the setting angle $\theta_2$ of $\rm{HWP_2}$, different projective measurements can be realized in the output ports $\rm{D_0}$ and $\rm{D_1}$. For example, $\rm{D_0}$ port corresponds to the projector (see more details in Supplemental Material \cite{Supplementary})
\begin{equation}\label{eqn:P}
  P=\begin{pmatrix}
  \cos^2 2\theta_{2} & \sin 2\theta_{2}\cos 2\theta_{2} & 0\\
  \sin 2\theta_{2}\cos 2\theta_{2} & \sin^2 2\theta_{2} & 0\\
  0 & 0 & 0\\
  \end{pmatrix}.
\end{equation}
Therefore, by inputting the same state into the measurement devices for enough trials, we can obtain the variances of different projectors for this input state, and consequently a point in the UR. By sampling enough states, the whole qutrit UR of a pair of projective measurements can be observed. In this configuration, we can also realize the observation of qubit URs by simply block detector $\rm{D_2}$ or post-selecting the detection events of detector $\rm{D_0}$ and $\rm{D_1}$.


\par

In our experiment, we randomly sample two families of pure qutrit states. The first is 300 arbitrary pure qutrit states, and the other is 100 "boundary states" with eigen-mode $\ket{2}$ fully unoccupied and a fixed phase ($0$ or $\pi$) between mode $\ket{0}$ and $\ket{1}$, which lies on the boundary of the ellipse in Eq. (\ref{eqn:R_2}) and Eq. (\ref{eqn:R_1_d3}). By setting $\theta_2=0,{5\pi}/{72},{\pi}/{12},{\pi}/{8}$, four projective measurements on the total 400 states is realized in detector $\rm{D_0}$, with an angle $2\theta_2$ between the eigen-projector $\ket{0}\bra{0}$. We denote the four projective measurement operators as $P_1$, $P_2$, $P_3$ and $P_4$, respectively.
Together with detector $\rm{D_1}$, detector $\rm{D_2}$ and the heralding detector $\rm{D_T}$, the statistics of measurement outcomes are registered by a coincidence logic with a coincidence window of 4.5ns. For each input state, the average registered photon counting events is about 45000 during one experiment and the experiment was repeated 5 times. From the measured statistics, the average values and variances of the four projectors for 400 input states were calculated. Dividing the four projectors into four pairs of observables, $\{P_1,P_2\}$, $\{P_1,P_3\}$, $\{P_1,P_4\}$ and $\{P_3,P_4\}$, their URs are shown in Fig. \ref{fig:qutrit:a}.  By post-selecting and normalizing the measured statistics in $\rm{D_0}$ and $\rm{D_1}$, we also derive the expectation values and variances of four qubit projectors, which is the non-zero two-dimensional block in $P_j$, denoted as $p_j(j=1,2,3,4)$. Their URs are shown in Fig. \ref{fig:qubit:b}.

For each pair of projectors $P_j$($p_j$) and $P_k$($p_k$), written in the form of Eq. (\ref{eqn:P}), their corresponding angle $\theta_{i,j}$ can be derived as $\theta_{j,k}=2(\theta_{2k}-\theta_{2j})$. It can be seen that both the qutrit and qubit experimental results agree well with the theoretical predictions for different values of $\theta_{j,k}$, so the experimental results conclusively demonstrate our analytical characterization in the qubit and qutrit situation. Points out of the regions mainly attributed to statistical fluctuation and instability of the interferometer, see Supplemental Material \cite{Supplementary} for a detailed analysis.

\textit{Conclusion.---} In this work, we develop the notion of uncertainty region to geometrically characterize the entire area of variances of the observables. This notion gives an entire description of the uncertainty relations of a pair of or more incompatible observables. Once the UR has been characterized, any other forms of variance-based uncertainty relations can also be determined. URs are also viewed as a kind of state-independent uncertainty relation, because it gives region for all quantum state (pure or mixed). Compare to the previous works, we point out that UR is necessarily semialgebraic, both for pure states and mixed states. Further, by applying the Jordan's Lemma, we give a complete analytical characterization of URs for a pair of qubit observables and a pair of arbitrary dimensional projective observables. Versatile photonic setups are used for observing URs of qubit and qutrit projectors and the results are highly consistent with the theory. Our work provides a new method for studying the uncertainty regions of incompatible observables and gives a new perspective for understanding and representing the uncertainty of quantum observables.

\begin{acknowledgments}
This work was supported by the National Key Research and Development Program of China (grant no. 2017YFA0303703), the National Natural Science Foundation of China (grant nos. 11690032, 61490711, 11474159 and 11574145) and ARC DECRA 180100156.
\end{acknowledgments}

\clearpage
\begin{widetext}
\section{Supplemental Materials for "Entirety of Quantum Uncertainty and Its Experimental Verification"}
\section{Detailed Derivation}
\subsection{Proof of Theorem 2}
\par
Let $A=P+iQ$.
According to the famous Hausdorff-Toeplitz theorem~\cite{Hausdorff1919}, we have
that the numerical range of $A$
$$W(A)=\{\langle \psi|A|\psi\rangle:|\psi\rangle\mathrm{\ is\ pure}\}$$
is convex and compact.
In other words,
it equals
$$W'(A)=\{\Tr(\rho A):\rho~\mathrm{ranges~over~all~mixed~states.}\}$$
Directly, we know that $R_m=R_p$ in this case.

However, when the number of projectors in the set increase, this is not true generally. For example, considering the following four qubit states
\begin{equation*}\label{SIC_state}
  \ket{\psi_1}=\ket{0},\ket{\psi_2}=-\frac{1}{\sqrt{3}}\ket{0}+\sqrt{\frac{2}{3}}\ket{1},\\
  \ket{\psi_3}=-\frac{1}{\sqrt{3}}\ket{0}+e^{i\frac{2\pi}{3}}\sqrt{\frac{2}{3}}\ket{1},\\
  \ket{\psi_4}=-\frac{1}{\sqrt{3}}\ket{0}+e^{-i\frac{2\pi}{3}}\sqrt{\frac{2}{3}}\ket{1}
\end{equation*}
which form a regular tetrahedron on the Bloch ball. Uncertainty region of their corresponding projectors $(\Delta P_{\ket{\psi_1}},$ $\Delta P_{\ket{\psi_2}},\Delta P_{\ket{\psi_3}},\Delta P_{\ket{\psi_4}})$ no longer have the statement $R_m=R_p$. Considering the maximally mixed state $\rho=\frac 1 2 I$, it generate the point $(1/4,1/4,1/4,1/4)$ in $R_m$, while for arbitrary pure qubit states, no one can achieve this point. Because for all pure states, it must have that $\Delta P_{\ket{\psi_1}}+\Delta P_{\ket{\psi_2}}+\Delta P_{\ket{\psi_3}}+\Delta P_{\ket{\psi_4}})=\frac 2 3$.

\subsection{Proof of Theorem 3}
\par
\begin{proof}
Let us consider the one-dimensional Gaussian wave packet at time 0:
$$
\psi(x,0) = \frac{1}{\sqrt{a\sqrt{\pi}}}e^{-\frac{x^2}{2a^2}+ik_0\frac{x}{a}}.
$$
It is straightforward to check $\psi$ is normalized.
Solving the Schr\"odinger equation
$$
i\hbar\frac{\partial}{\partial t}\psi(x,t) = \frac{p^2}{2m}\psi(x,t) = -\frac{\hbar^2}{2m}\frac{\partial^2}{\partial x^2}\psi(x,t)
$$
with the initial condition $\psi(x,0)$ results that:
$$
\psi(x,t) = \frac{1}{\sqrt{a\sqrt{\pi}}}\frac{1}{\sqrt{1+\frac{i\hbar}{ma^2}t }}e^{-\frac{1}{2}k_0^2}e^{-\frac{1}{1+\frac{i\hbar}{ma^2}t}\frac{1}{2}\left(\frac{x}{a}-ik_0\right)^2}.
$$
After tedious calculations, we obtain the expectations of $x,x^2,p,p^2$:
\begin{align*}
\<x\>_{\psi(x,t)} &= \int_{-\infty}^\infty \psi(x,0)^\ast\psi(x,0)x\diff x = \frac{\hbar k_0t}{ma},\\
\<x^2\>_{\psi(x,t)} &= \int_{-\infty}^\infty \psi(x,0)^\ast\psi(x,0)x^2\diff x \\
&= \frac{a^2}{2}+\frac{\hbar^2 t^2}{2m^2a^2}+\frac{\hbar^2 k_0^2t^2}{m^2a^2},\\
\<p\>_{\psi(x,t)} &= -i\hbar\int_{-\infty}^\infty \psi(x,0)^\ast\frac{\partial}{\partial x}\psi(x,0)\diff x = \hbar \frac{k_0}{a},\\
\<p^2\>_{\psi(x,t)} &= -\hbar^2\int_{-\infty}^\infty \psi(x,0)^\ast\frac{\partial^2}{\partial x^2}\psi(x,0)\diff x \\
&= \hbar^2\frac{1}{2a^2}+ \hbar^2\frac{k_0^2}{a^2}
\end{align*}
which leads to:
\begin{align*}
&\Delta x_{\psi(x,t)} = \sqrt{\frac{a^2}{2}+\frac{\hbar^2 t^2}{2m^2a^2}}, \\
&\Delta p_{\psi(x,t)} = \sqrt{\hbar^2\frac{1}{2a^2}}.
\end{align*}
With proper choices of $a$ and $t\ge0$, $(\Delta x_{\psi(x,t)},\Delta p_{\psi(x,t)})$ fulfills all regions of $S = \{(x,y):xy\geq \frac{\hbar}{2},x,y>0\}$, or
equivalently, $R_p = S$.
Moreover, $R_p\subseteq R_m \subseteq S$ which implies $R_p=R_m=S$.
\end{proof}
\subsection{Derivation of $R^{(2)}$}
\par
In the Bloch sphere representation, an arbitrary qubit density matrix $\rho$ and two projectors $A$ and $B$ can be written as
	\begin{align*}
	\rho=\frac{1}{2}(I + \vec{r}\cdot\vec{\sigma})
	\end{align*}
    \begin{align*}
	A=\frac{1}{2}(I + \vec{a}\cdot\vec{\sigma})
	\end{align*}
	\begin{align*}
	B=\frac{1}{2}(I + \vec{b}\cdot\vec{\sigma})
	\end{align*}
where $\vec{r}=(r_x,r_y,r_z),\vec{a}=(a_x,a_y,a_z),\vec{b}=(b_x,b_y,b_z)$ are unit Bloch vectors and $\vec{\sigma}=(\sigma_x,\sigma_y,\sigma_z)$ is the trceless Pauli vector. According to Lemma 1 in the main text, $\vec{a},\vec{b}$ can be written in the basis that $\vec{a}=(0,0,1)$ and $\vec{b}=(\sin2\theta,0,\cos2\theta)$. Then by using the identity $(\vec{r}\cdot\vec{\sigma})(\vec{a}\cdot\vec{\sigma})=(\vec{r}\cdot\vec{a})I+i(\vec{r}\times\vec{a})\cdot\vec{\sigma}$ and according to Born's rule, it can be derived that
	\begin{align*}
	\Delta A=\frac{1}{4}\Big(1-(\vec{r}\cdot\vec{a})^2\Big)
	\end{align*}
	\begin{align*}
	\Delta B=\frac{1}{4}\Big(1-(\vec{r}\cdot\vec{b})^2\Big)
	\end{align*}
	which can be further written as
	\begin{equation}\label{eqn:r_x}
	r_x\sin{2\theta}=\pm\sqrt{1-4\Delta B}-r_z\cos{2\theta}
	\end{equation}
	\begin{equation}\label{eqn:r_z}
		r_z^2=1-4\Delta A
	\end{equation}
The above two equations characterize the relationship between $\Delta A$ and $\Delta B$ for different Bloch vector $\vec{r}$. By ranging $\vec{r}$ over the whole Bloch ball, the set of $R^{(2)}_m$ can be fully characterized. From Eq.(\ref{eqn:r_x}) and Eq.(\ref{eqn:r_z}), it can be seen that the only constraint on $\vec r$ is $r_x^2+r_z^2\le1$, that is the positivity constraint. Embedding Eq.(\ref{eqn:r_x}) and Eq.(\ref{eqn:r_z}) into the constraint, $R^{(2)}$ can be derived from the following inequality
	\begin{equation}
    \label{rx_rz_constraint}
	\pm2\cos2\theta\sqrt{(1-4\Delta A)(1-4\Delta B)}\le4(\Delta A+\Delta B)-(1+\cos^22\theta),
	\end{equation}\label{reg}
	which is equivalent to the union set of the following two regions
	\begin{align}
              \label{seqn:R_1}
	R_1^{(2)}:
	\left\{
	\begin{array}{lr}
	\Delta A+\Delta B\geq (1+\cos^22\theta)/4,&\\
	\Delta A,\Delta B\leq 1/4,
	\end{array}
	\right.
	\end{align}
	\begin{align}
       \label{seqn:R_2}
	R_2^{(2)}:
	\left\{
	\begin{array}{lr}
	\Delta A+\Delta B\leq(1+\cos^22\theta)/4,&\\
	64x_1^2/(1+\cos4\theta)+64y_1^2/(1-\cos4\theta)\leq1,
	\end{array}
	\right.
	\end{align}
where $x_1$ and $y_1$ are
	\begin{align*}
		\left( \begin{array}{ccc}
		x_1 \\
		y_1
		\end{array}
		\right)=
		\left(
		\begin{array}{ccc}
		\cos{\frac{\pi}{4}} & \sin{\frac{\pi}{4}} \\
		-\sin{\frac{\pi}{4}} & \cos{\frac{\pi}{4}}
		\end{array}
		\right)
		\left(
		\begin{array}{ccc}
		\Delta A-{1}/{8} \\
		\Delta B-{1}/{8}
		\end{array}
		\right).
	\end{align*}

\subsection{Derivation of $R^{(d)}(d\geq3)$}
As shown in the main text, by applying Lemma 1, all the calculations in higher dimensional system can be reduced into the qubit subspace. The two rank 1 projectors $A$ and $B$ can be written as
	\begin{align*}
	A=\begin{pmatrix}
	1 & 0 & \quad \\
	0 & 0 & \quad \\
	\quad & \quad & \ddots
	\end{pmatrix},
    B=\begin{pmatrix}\cos^2\theta & \cos\theta\sin\theta & \quad \\ \cos\theta\sin\theta & \sin^2\theta & \quad \\ \quad & \quad & \ddots
	\end{pmatrix},
	\end{align*}
in which the non-zero $2\times2$ block indicating the qubit subspace spaned by the two projectors. Thus the density matrix $\rho$ can also be written in the same qubit subspace
	\begin{align*}
	\rho = \begin{pmatrix} \rho^{(2)} \\
	& & \ddots
	\end{pmatrix},
	\end{align*}
where all the blank space set to 0 and $\rho^{(2)}$ is an unnormalized qubit density matrix
	\begin{align*}
	\rho^{(2)}=\frac{1}{2}(\alpha I+\vec{r}\cdot\vec{\sigma}),r^2=1
	\end{align*}
satisfying $0\leq\alpha\leq1$ and $r^2\leq\alpha^2$ as the positivity constraint. Here $\alpha=\Tr (\rho^{(2)})$ is the probability that $\rho$ been projected into the qubit subspace. With the above representation, all the calculation can be reduced in the qubit subspace and it is easy to derive
\begin{eqnarray}\label{delta_A_delta_B_d_3}
  \nonumber\Delta A &=& \frac{1}{2}(\alpha+\vec{r}\cdot\vec{a})-\frac{1}{4}(\alpha+\vec{r}\cdot\vec{a})^2 \\
  \Delta B &=& \frac{1}{2}(\alpha+\vec{r}\cdot\vec{b})-\frac{1}{4}(\alpha+\vec{r}\cdot\vec{b})^2.
\end{eqnarray}
 Then we can solve the coordinates of $\vec{r}$ from Eq. (\ref{delta_A_delta_B_d_3})
 \begin{eqnarray}
 \label{rz_rx_d_3}
   \nonumber r_z &=& 1-\alpha\pm\sqrt{1-4\Delta A} \\
   r_x &=& \frac{1}{\sin{2\theta}}[(1-\alpha-r_z\cos{2\theta})\pm\sqrt{1-4\Delta B}].
 \end{eqnarray}
 Similarly, put the above equations into the positivity constraint $r_x^2+r_z^2\leq\alpha^2$, we can arrive a quadratic inequality about $\alpha$,
 \begin{equation}\label{quadratic inequality}
   f_1(\Delta A,\Delta B,\theta)\alpha^2+f_2(\Delta A,\Delta B,\theta)\alpha+f_3(\Delta A,\Delta B,\theta)\leq0,
 \end{equation}
 where $f_1(\Delta A,\Delta B,\theta)$, $f_2(\Delta A,\Delta B,\theta)$ and $f_3(\Delta A,\Delta B,\theta)$ are functions of $\Delta A$, $\Delta B$ and $\theta$. Given the fact that $0\leq\alpha\leq1$, if there exists an $\alpha\in[0,1]$ and corresponding parameters $(\Delta A,\Delta B,\theta)$ that Eq. (\ref{quadratic inequality}) can be satisfied, then the point $(\Delta A, \Delta B)$ is attainable, thus by discussing the above existing problem about Eq. (\ref{quadratic inequality}) for all the combinations of signs in Eq. (\ref{rz_rx_d_3}), $R^{(d)}$ can be derived. Specifically, for example, when $r_x$ and $r_z$ in Eq. (\ref{rz_rx_d_3}) both take minus sign, we have
 \begin{eqnarray}
   \nonumber f_1 &=& \tan^2{\theta}, \\
   \nonumber f_2 &=& \sec^2{\theta}(\sqrt{1-4\Delta A}+\sqrt{1-4\Delta B}-2), \\
   \nonumber f_3 &=& -\frac{1}{\sin^2{2\theta}}[2\cos{2\theta}(\sqrt{1-4\Delta A}-1)(\sqrt{1-4\Delta B}-1)-(\sqrt{1-4\Delta A}-1)^2-(\sqrt{1-4\Delta B}-1)^2].
 \end{eqnarray}
By defining the quadratic function $F(\alpha)=f_1\alpha^2+f_2\alpha+f_3$, which is symmetry with respect to the line $\alpha=\alpha_0=-\frac{f_2}{2f_1}$ and get its minimum value at $(\alpha_0,\frac{4f_1f_3-f_2^2}{4f_1})$, the above existing problem can be satisfied by the following two situations,
 \begin{equation*}
   \left\{
	\begin{array}{lr}
	\alpha_0\geq1,\\
    F(1)\leq0
	\end{array}
	\right.
 \end{equation*}
 \begin{equation*}
   \left\{
	\begin{array}{lr}
	0\leq\alpha_0\leq1,\\
    \frac{4f_1f_3-f_2^2}{4f_1}\leq0
	\end{array}
	\right.
 \end{equation*}
 The above two situations correspond to the two graphs in Fig. \ref{fig:quadratic_curve}. Put the explicit form of $f_1,f_2,f_3$ into the above equation, we get the following two attainable regions of $(\Delta A,\Delta B)$
\begin{align}
              \label{seqn:R_1_d3}
	R_1^{(d)}:
	\left\{
	\begin{array}{lr}
	2-\sqrt{1-4\Delta A}-\sqrt{1-4\Delta B}\geq2\sin^2{\theta},&\\
    \quad  \\
    \pm2\cos2\theta\sqrt{(1-4\Delta A)(1-4\Delta B)} \\
    -4(\Delta A+\Delta B)+(1+\cos^22\theta)\leq0,
	\end{array}
	\right.
	\end{align}
	\begin{align}
       \label{seqn:R_2_d3}
	R_2^{(d)}: 0\leq2-\sqrt{1-4\Delta A}-\sqrt{1-4\Delta B}<2\sin^2{\theta}.
\end{align}
Next, we consider the other 3 combinations of signs in Eq. (\ref{rz_rx_d_3}) and the union set of all the attainable regions equal to the uncertainty region $R^{(d)}$. However, it can be verified that the attainable regions in Eq. (\ref{seqn:R_1_d3}) cover the other three combinations and equal to $R^{(d)}$. So for higher dimensional quantum system their uncertainty regions of two projecters $A,B$ are given by Eq. (\ref{seqn:R_1_d3}) and Eq. (\ref{seqn:R_2_d3}).

\begin{figure*}
  \centering
  \addtocounter{figure}{1}
  \subfigure[]{
    \label{fig:quadratic_1} 
    \includegraphics[width=0.38\textwidth]{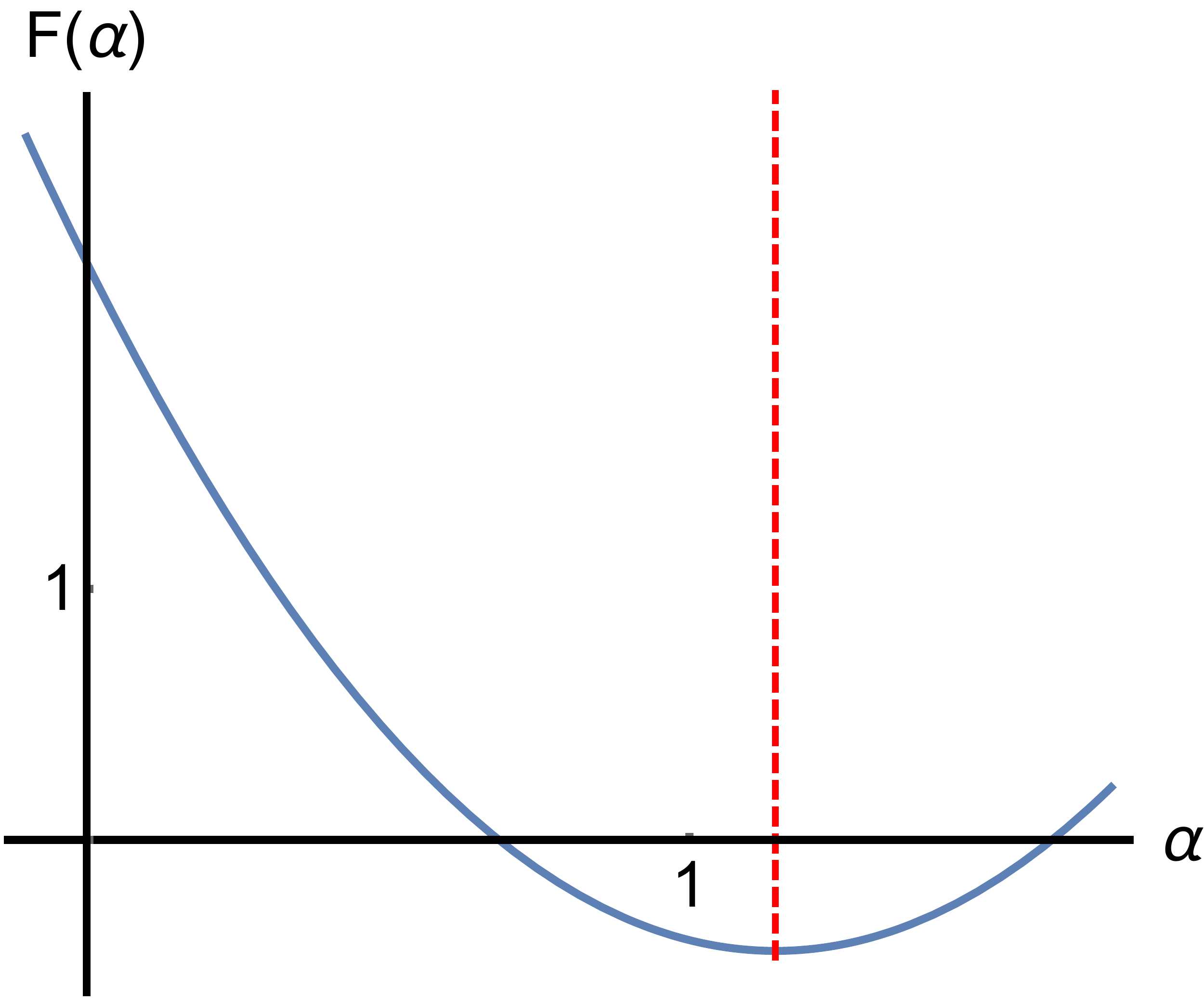}}\quad\quad\quad\quad\quad\quad\quad\quad\quad
  \subfigure[]{
    \label{fig:quadratic_0_1} 
    \includegraphics[width=0.38\textwidth]{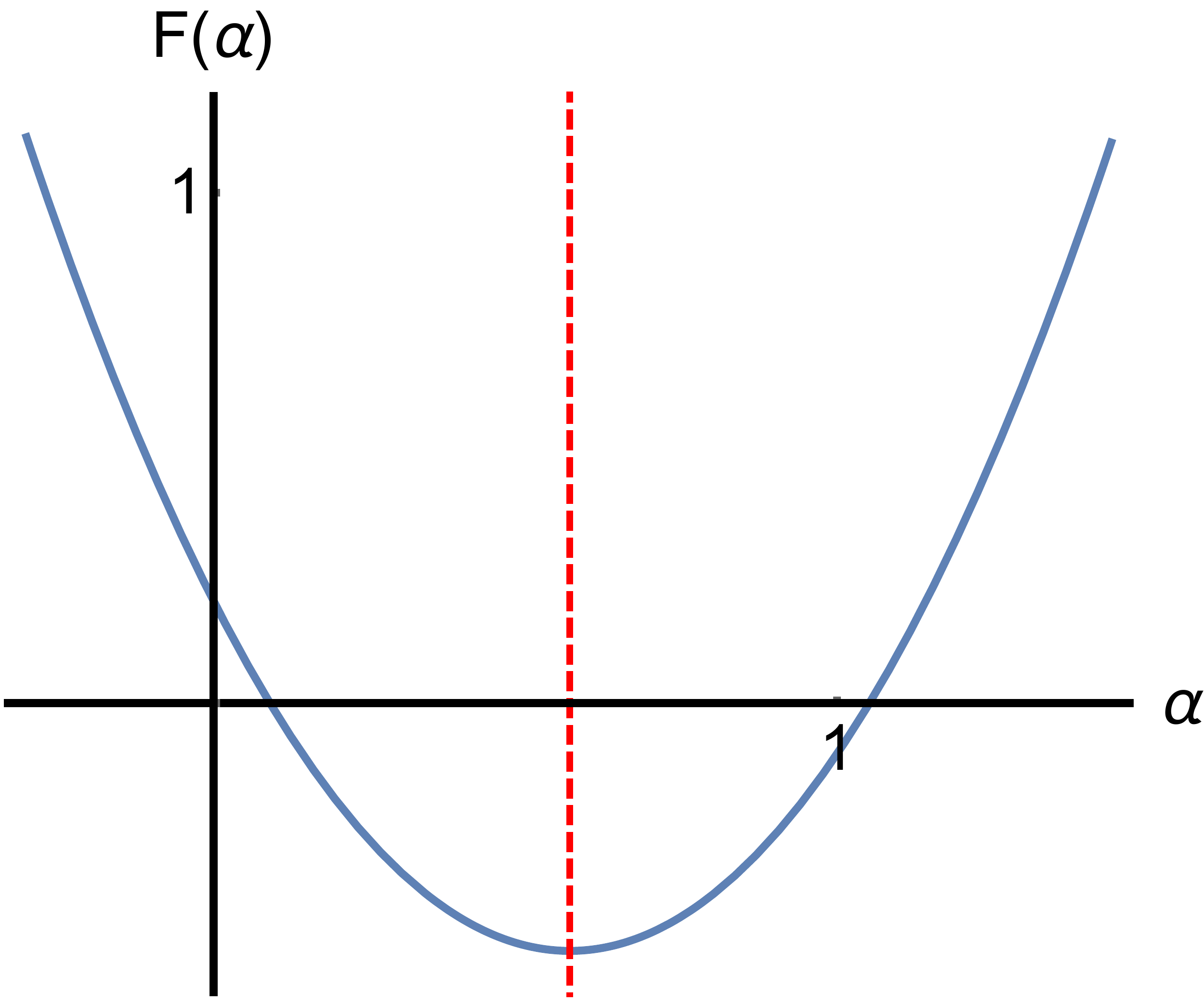}}
  \addtocounter{figure}{-1}
  \caption{Two exemplary curves (bule lines) and its symmetrical axes (red dashed lines) of $F(\alpha)$ satisfying the existing problem and therefore satisfying the positivity constraint, . (a) When $\alpha_0\geq1$, as long as $F(1)<0$, Eq. (\ref{quadratic inequality}) can be satisfied. (b) When $\alpha_0\in[0,1)$, as long as the minimum value of $F(\alpha)$ less than $0$, Eq. (\ref{quadratic inequality}) can be satisfied by some $(\Delta A,\Delta B)$ for a given $\theta$.}
  \label{fig:quadratic_curve} 
\end{figure*}
\section{Experimental Details}
\subsection{State Preparation}
We use BD, wave plates, a LCPR and a QWP-HWP-QWP configuration to prepare arbitrary pure qutrit states in the form
\begin{equation*}
  \ket{\psi}=e^{i\phi_1}\cos{\theta_A}\ket{0}+e^{i\phi_2}\sin{\theta_A}\sin{\theta_B}\ket{1}-\sin\theta_A\cos{\theta_B}\ket{2},
\end{equation*}
encoded in the polarization and spatial optical modes. The two phase factor $\phi_1$ and $\phi_2$ are implemented by the LCPR and the QWP-HWP-QWP configuration, respectively. The LCPR (Thorlabs, LCC1113-B) is set with its optical axis parallel to the horizontal polarization and add a relative phase between horizontal and vertical polarization. Due to the tiny separating distance (4mm) of the two spatial modes and limited retardance uniformity of the liquid crystal phase retarder, we use a QWP-HWP-QWP configuration as the second phase retarder with two QWP setting at $45^\circ$ and an electronically-controlled HWP setting at diferent angles to realize different $\phi_2$~\cite{Xie2019}.

\subsection{Measurement of Uncertainty Regions}
As depicted in Fig. 2(c) in the main text, by setting a appropriate angle $\theta_2$ of HWP2, we implement three orthogonal qutrit projective measurements on the output ports of the measurement module. Therefore, HWP2 together with HWP3 perform the following unitary in the Jones matrix notation on the input state
	\begin{align*}
    \label{U}
	U = \begin{pmatrix} \cos2\theta_2 & \sin2\theta_2 & 0 \\
	\sin2\theta_2 & -\cos2\theta_2 & 0 \\
    0 & 0 & 1
	\end{pmatrix},
	\end{align*}
which can be written in a more intuitive form
\begin{equation}
\label{U_waiji}
  U=\ket{0}\bra{\psi_0}+\ket{1}\bra{\psi_1}+\ket{2}\bra{\psi_2},
\end{equation}
where $\ket{\psi_0}=\cos2\theta_2\ket{0} + \sin2\theta_2\ket{1},\ket{\psi_1}=\sin2\theta_2\ket{0}+\cos2\theta_2\ket{1}$ and $\ket{\psi_2}=\ket{2}$. From Eq. (\ref{U_waiji}), it can be seen that $U$ transform the input state from the basis of $\ket{\psi_i}(i=0,1,2)$ into experimental basis, then after BD3, detector $\rm{D_0}$ equivalently perform the projective measurement $\ket{\psi_0}\bra{\psi_0}$. By setting different angles $\theta_{2j}=\{0,\frac{\pi}{72},\frac{\pi}{12},\frac{\pi}{8}\}$, the corresponding projective measurements in dector $\rm{D_0}$ can be written as
\begin{equation*}
  P_j=\begin{pmatrix} \cos^22\theta_{2j} & \cos2\theta_{2j}\sin2\theta_{2j} & 0 \\
	\cos2\theta_{2j}\sin2\theta_{2j} & \sin^22\theta_{2j} & 0 \\
    0 & 0 & 0
	\end{pmatrix},
\end{equation*}
thus the angle between $P_j$ and $\ket{0}\bra{0}$ is $2\theta_j$ and for different pair of $\{P_j,P_k\}$, their angle is $\theta_{j,k}=2(\theta_k-\theta_j)$.

The above discussion showed how we implement different qutrit projective measurements. As for the qubit projectors, we only need to post-select and re-normalize the measured statistics of dector $\rm{D_0}$ and $\rm{D_1}$. Then different qubit projectors $p_j$ is realized in dector $\rm{D_0}$, where $p_j$ is the non-zero 2 by 2 block in $P_j$. By this method, we have measured URs of qubit and qutrit projective measurements simultaneously (in one experiment).

\subsection{Experimental Error Analysis}
As mentioned in the main text, points generated by all states should be in its corresponding UR. Therefore, imperfection in the state preparation stage may lead to derivation between the theoretical value but won't lead to experimental obtained points falling out of the theoretical region. As show in our experiment results, there exist several points out of the theoretical region in the fourth figure in Fig 3(a) in the main text. This may attributed to imperfection and inaccuracy of the wave-plates in the measurement stage or instability of the interferometer or simply the statistical fluctuation. Considering the tiny retardation errors (typically $\sim\lambda/300$ where $\lambda=830$nm), inaccurate setting angle (typically  $\sim0.2$ degree) and misalignment (typically  $\sim0.1$ degree) of wave plates and the high photon counting (about 45000/s), the experimental error mainly caused by the slowly drift and slight vibrating of the interferometer during the measurement progress (about 80 minutes for all 400 states), which lead to a limited interference visibility and thus biased projective measurement. However, during the whole measuring progress of $(\Delta p_3,\Delta p_4)$, the average interference visibility is above $98\%$. Except for the before mentioned several points in the fourth figure in Fig 3(a), almost all experimental datas are located in its corresponding theoretical regions.
\end{widetext}

\end{document}